\documentclass[10pt, conference, letterpaper]{IEEEtran}
\IEEEoverridecommandlockouts





\usepackage{balance}


\usepackage[utf8]{inputenc}
\usepackage{amsmath,amssymb,amsfonts,amsthm}
\usepackage{graphicx}
\usepackage{textcomp}
\usepackage{xcolor}
\usepackage{array}
\usepackage{booktabs}
\usepackage{pifont}
\usepackage{subfigure}
\usepackage{url}
\usepackage{enumitem}
\usepackage{soul}

\newtheorem{theorem}{Theorem}

\newtheorem{definition}{Definition}

\newtheorem{corollary}{Corollary}

\usepackage{algorithm}
\usepackage{algorithmicx}
\usepackage{algpseudocode}

\usepackage{comment}
\usepackage{enumitem}
\usepackage[normalem]{ulem}
\usepackage{tcolorbox}

\definecolor{purple}{rgb}{1,0,1}
\definecolor{red}{rgb}{1,0,0}
\definecolor{blue}{rgb}{0,0,1}


\newcommand{\ie}{\textit{i.e., }}
\newcommand{\myitem}[1]{\vspace*{0.01in}\noindent\textbf{#1}}
\newcommand{\myitemit}[1]{\vspace*{0.04in}\noindent\textbf{\textit{#1}}}
\newcommand{\myitemitnospace}[1]{\noindent\textbf{\textit{#1}}}

\newcommand{\mysubsubsection}[1]{\myitem{#1}.}

\newcommand{\secref}[1]{\S\ref{#1}}
\newcommand{\extratext}[1]{#1}

\newcommand{\eq}[1]{Eq.~(\ref{#1})}

\usepackage{xspace}

\usepackage{multirow}

\newcommand{\mytableskip}{-\baselineskip}
\newcommand{\myfigureskip}{-\baselineskip}

\title{Estimating the Impact of BGP Prefix Hijacking%
\thanks{This research is co-financed by Greece and European Union through the Operational Program Competitiveness, Entrepreneurship and Innovation under the call RESEARCH-CREATE-INNOVATE (project T2EDK-04937), and the European High-Performance Computing Joint Undertaking (GA No. 951732).}%
}

\author{
Pavlos Sermpezis$^{\mathparagraph}$, Vasileios Kotronis$^{\dagger}$, Konstantinos Arakadakis$^{\dagger, \ddagger}$, Athena Vakali$^{\mathparagraph}$
\\~
\textit{
$^{\mathparagraph}$Aristotle University of Thessaloniki, Greece; $^{\dagger}$FORTH-ICS, Greece;
$^{\ddagger}$University of Crete, Greece
}
}

\begin{document}
\maketitle

\begin{abstract}
BGP prefix hijacking is a critical threat to the resilience and security of communications in the Internet. While several mechanisms have been proposed to prevent, detect or mitigate hijacking events, it has not been studied how to accurately quantify the impact of an ongoing hijack. When detecting a hijack, existing methods do not estimate how many networks in the Internet are affected (before and/or after its mitigation). In this paper, we study fundamental and practical aspects of the problem of estimating the impact of an ongoing hijack through network measurements. We derive analytical results for the involved trade-offs and limits, and investigate the performance of different measurement approaches (control/data-plane measurements) and use of public measurement infrastructure. Our findings provide useful insights for the design of accurate hijack impact estimation methodologies. Based on these insights, we design (i) a lightweight and practical estimation methodology that employs ping measurements, and (ii) an estimator that employs public infrastructure measurements and eliminates correlations between them to improve the accuracy. We validate the proposed methodologies and findings against results from hijacking experiments we conduct in the real Internet. 

\end{abstract}


\maketitle

\IEEEpubidadjcol



\section{Introduction}\label{sec:intro}
The Border Gateway Protocol (BGP) is used by the Autonomous Systems (ASes) to 
establish routing paths in the Internet.
Due to its distributed nature and lack of authentication in the exchanged information, BGP is susceptible to illegitimate route advertisements. \textit{BGP prefix hijacking} is the most prominent example. Numerous hijacking events with global impact on the availability and confidentiality of communications, e.g.,~\cite{google-japan-leak,nanog-squatting,
hijack-BitCoins,hijack-ChinaTelecom
}, and concerns expressed by network operators~\cite{sermpezis2018survey}, show that BGP prefix hijacking is a common and persistent threat to the Internet ecosystem. 

Defenses against BGP prefix hijacking consist of (i) prevention measures, such as prefix filtering or RPKI, which block the propagation of illegitimate routes~\cite{rpki-rfc,bgpsec-specification-2015,
Jumpstarting-BGP-sigcomm-2016
}, and (ii) detection techniques~\cite{sermpezis2018artemis,
Zhang-Ispy-CCR-2008,
heap-jsac2016,
Shi-Argus-IMC-2012
} that inform operators to proceed to counteractions or trigger mitigation techniques~\cite{sermpezis2018artemis,
qiu2010towerdefense}. The efficiency that these mechanisms are expected to have (on average) under various hijacking scenarios has been studied in literature. 
However, \textit{when a hijacking event takes place, there do not exist techniques that can provide accurate information about its actual impact}. 

Knowing the impact of a hijack is important for several reasons: (i) Inform operators about the effect of an ongoing hijack (e.g., global, limited to a few ASes)
. This information may drive their actions, for example, to select mild (e.g., ask other operators to filter hijacked routes) or more aggressive (e.g., prefix de-aggregation) countermeasures to mitigate the hijack, based on its impact~\cite{sermpezis2018artemis}. (ii) Evaluate the actual efficiency of a mitigation action (which is deemed as a key priority by operators~\cite{sermpezis2018survey}). Knowing the remaining impact of the hijack after the mitigation, can help operating and business decisions; for instance, to decide whether further actions are required, or to assess the cost/value of a paid service (e.g., blackholing~\cite{nawrocki2019down} or MOAS announcements
~\cite{sermpezis2018artemis,qiu2010towerdefense}). 


These reasons highlight the need for designing hijack impact estimation techniques, which could be incorporated in existing defense systems and become a valuable asset for network operations and security.
In this direction,
\textit{the goal of this work is to study 
the problem of estimating the impact of an ongoing hijack through measurements, and take the first steps towards designing accurate hijack impact estimation methodologies}. 


Specifically,
we follow an 
approach comprising analysis, simulations, and real experiments and measurements: (i) We analytically study fundamental aspects of the hijack impact estimation, and derive results that identify limits and quantify trade-offs on the accuracy of different estimation methods. (ii) We employ realistic simulations to create datasets of hijack incidents%
\extratext{\footnote{To the best of our knowledge, there are no datasets with real data from past hijacking events that could enable an extensive and in depth investigation.}}%
, based on which we investigate the performance that can be achieved by using different measurement techniques (control/data plane) and available resources (RouteViews, RIPE RIS, RIPE Atlas). (iii) We conduct controlled hijacking experiments 
and extensive network measurements in the real Internet to validate our theoretical/simulation findings.


The main contributions of this work are summarized as:

\begin{itemize}[leftmargin=*]
\item \textbf{\textit{Understanding of the hijack impact estimation.}}~We study the accuracy of the hijack impact estimation under different types of measurements (\secref{sec:understanding}). We show that very high accuracy is possible by sampling/measuring any network in the Internet (e.g., $\sim$1\% estimation error with 1000 samples), while using public measurement infrastructure (RIPE, RouteViews) results in an estimation error of around 10\%. The root cause of this error is the correlation between the locations of public infrastructure monitors.


\item \textbf{\textit{Design of efficient estimators.}} Motivated by our findings, we propose efficient estimation methodologies with and without using public measurement infrastructure. 
We first propose an estimator based on ping campaigns (\secref{sec:ping-estimator}), which does not rely on public infrastructure and can be implemented by any network
. We find that by pinging a couple of reachable IP addresses in a few hundreds of ASes is enough for achieving very low errors. 
Then, we design an estimator based on public infrastructure  measurements (\secref{sec:ridge-estimator}), which employs statistical learning to eliminate the effect of correlation in measurements, and achieves an accuracy comparable to the (best performing) ping-based estimator.
\end{itemize}
To facilitate future research we make our code and data from our experiments available in~\cite{sermpezis2021estimating-code}.

\section{Preliminaries}\label{sec:preliminaries}
We first define the main quantities of the considered problem (\secref{sec:definitions}), provide an overview of 
the different hijack types and their impact characteristics 
(\secref{sec:taxonomy}) and the available network measurement techniques and services (\secref{sec:preliminary-measurements}), and present the simulation (\secref{sec:simulation-methodology}) and experimental (\secref{sec:real-world-exp}) methodology used in the paper.

\subsection{Definitions: Quantities and Metrics}\label{sec:definitions}


\myitemitnospace{Infected AS:} an AS is infected when it routes its traffic to or through the hijacker's AS.

\myitemitnospace{Hijack impact, $I$}: fraction of ASes that are \textit{infected}, $I\in[0,1]$.

\myitemitnospace{Monitor}: an AS for which it can be known (e.g., after a measurement) if it is \textit{infected} or not.

\myitemitnospace{Impact estimator:} a methodology that uses a set of \textit{monitors} and provides an estimation of the \textit{hijack impact}.

Let $I$ be the actual impact of a hijack, and an estimator $\mathcal{E}$ whose estimation is $\hat{I}_{\mathcal{E}}$. The main metrics to characterize the performance of an estimator is the \textit{bias} and the \textit{Root Mean Square Error (RMSE)}: 
\vspace{-0.5\baselineskip}
\begin{equation*}
Bias_{\mathcal{E}} = E[\hat{I}_{\mathcal{E}}-I] 
\hspace{1cm}
RMSE_{\mathcal{E}} = \sqrt{E[(\hat{I}_{\mathcal{E}} - I)^2]}
\end{equation*}
The bias quantifies how far the expected value of the estimator is from the actual value, and the RMSE quantifies the accuracy of the estimator. 
The desired characteristics for an estimator are to be \textit{unbiased} (zero bias) and have low RMSE.

\subsection{
Hijack Types and Impact Characteristics}\label{sec:taxonomy}
We consider an AS that owns and announces a prefix $IP_{*}$; we denote this AS as $AS_{V}$ and call it ``victim AS''. Let the best path to $IP_{*}$ for an $AS_{X}$ be 
\[[AS_{X}, AS_{Y}, ..., AS_{V} | IP_{*}]\] 
where $AS_{X}$ has learned this path through its neighbor $AS_{Y}$. 
%
A hijack takes place when another AS announces an illegitimate path for the prefix $IP_{*}$ or for a more specific prefix. This is the ``hijacker AS'' and we denote it as $AS_{H}$. The hijacker's announcement may propagate to the Internet and ``infect'' some ASes; the extent of the infection is the hijack impact.

There are different ways to perform a hijack and their impact may vary. Below, we present a taxonomy of hijacks~\cite{sermpezis2018artemis, miller2019taxonomy} and discuss their impact characteristics.

\myitem{Origin-AS (Type-0) or Fake-path (Type-N, $N\geq1$) hijack.} In the \textit{origin-AS} hijack the $AS_{H}$ originates the $IP_{*}$ as its own, while in the \textit{fake-path} case the $AS_{H}$ announces a fake path to the $IP_{*}$ to its neighbors; e.g., for an $AS_{X}$

\begin{tabular}{cl}
Type-0: &  $[AS_{X}, AS_{Y}, ..., \mathbf{AS_{H}} | IP_{*}]$\\
Type-N: &  $[AS_{X}, AS_{Y}, ..., \mathbf{AS_{H}}, AS_{Z},..., AS_{V} | IP_{*}]$\\
\end{tabular}\\
where the link $AS_{H}-AS_{Z}$ is fake; the number $N$ denotes that the hijacker's ASN appears in the $N^{th}$ hop away from the origin AS.





\textit{Impact characteristics:} 
In Type-N cases the hijacker originates a longer path than in Type-0 (i.e., with N extra hops). Thus, the paths to the hijacker are longer and less preferred by some ASes; this results in a lower impact for higher $N$~\cite{sermpezis2018artemis,Jumpstarting-BGP-sigcomm-2016}. Note that this holds when no proactive measures, such as RPKI, are deployed; for prefixes protected by RPKI (less than 20\% today~\cite{rpki-monitor-2019,chung2019rpki}), the impact decreases in Type-0 attacks due to route origin validation~\cite{reuter2018towards}, while the impact of Type-N hijacks is not affected since RPKI cannot detect fake links.

\myitem{Exact prefix or Sub-prefix hijack.} The hijacker can perform a Type-0 or Type-N hijack for the same prefix $IP_{*}$ announced by the victim (\textit{exact prefix}) or for a more specific prefix in $IP_{*}$ (\textit{sub-prefix}); for example, let the $IP_{*}$ be the prefix $10.0.0.0/8$, then a sub-prefix hijack takes place if the hijacker announces the prefix $10.0.0.0/9$ (or any $10.0.0.0/d$ with $d\geq9$).

\textit{Impact characteristics:} Default routing in BGP prefers paths to more specific prefixes~\cite{ciscobestpath}. Hence, the impact of a sub-prefix hijack will be larger than an exact prefix hijack; in fact, a sub-prefix hijack will infect the entire Internet (i.e., impact 100\%) 
unless a proactive or 
filtering mechanism is applied. 

\myitem{Data-plane traffic manipulation.} 
For all the aforementioned hijack types, the hijacker can manipulate the traffic that it attracts by: (i) dropping it (\textit{blackholing}), (ii) impersonating a 
service (\textit{imposture}), or (iii) manipulating or eavesdropping~it and then forwarding it to the victim (\textit{man in the middle, MitM}).

\textit{Impact characteristics:} While the traffic manipulation in the data plane by the hijacker does not affect the impact on the control plane (i.e., as defined in this paper), it determines what hijack detection and impact estimation approaches
can be applied (data/control plane measurements) as we discuss~later. 

\vspace{-0.5\baselineskip}
\subsection{Measuring the Hijack Infection}\label{sec:preliminary-measurements}

The hijack impact is determined by the number of infected ASes. Hence, the basic step for an impact estimation is to identify whether an AS/monitor is infected or not. In principle, this can be done by applying any hijack detection method~\cite{sermpezis2018artemis,
Zhang-Ispy-CCR-2008,
heap-jsac2016,
Shi-Argus-IMC-2012
} per~monitor. 

Detection methods are mainly based on three network measurements types. 
Below, we provide some indicative examples for each type, and discuss their main characteristics
. Our first goal in this paper is to investigate how efficient is to use each of these measurement types for impact estimation (see \secref{sec:understanding}). 

\myitem{Route collectors (RC) - BGP routes:} 
A monitor that provides information about its BGP routes (BGP updates or RIBs) can be detected as infected or not (from the AS-path or the prefix in its selected BGP route (see~\cite{sermpezis2018artemis} for a comprehensive approach). For example, for a Type-0 hijack , if the first ASN in the path is different than $AS_{V}$, then the monitor is infected.

The RIPE RIS~\cite{riperis} and RouteViews~\cite{routeviews} projects provide BGP RIBs/updates collected from hundreds ASes
. We refer to these ASes that peer with RIPE RIS / RouteViews route collectors and provide BGP feeds as ``route collector monitors'', or for brevity ``RC''. In the paper, we consider a set of 228 RC that consistently provided data in our experiments (see \secref{sec:real-world-exp}). 

\extratext{The main characteristics of this approach is that it is based on control-plane information, it is lightweight (requires passive measurements, which can be retrieved from public APIs~\cite{ripeatlas}), and can be real-time since several RC provide live-feed of their BGP updates~\cite{routeviews,ripe-ris-real-time,caidabgpstream}.}

\myitem{RIPE Atlas probes (RA) - traceroutes:} Conducting a traceroute from a monitor to the hijacked prefix, returns a path of IP addresses
. Mapping the IPs to ASNs, we can infer the AS-path, and thus detect (similarly to the BGP routes) if the monitor is infected. However, in practice the IP to ASN mapping may be inaccurate for some hops, and advanced methods may be needed to avoid path misinformation~\cite{accurate-traceroute-sigcomm,bdrmapit}.

The RIPE Atlas~\cite{ripeatlas} platform comprises more than $25k$ probes
, \ie devices able to run 
traceroutes towards certain Internet destinations. We refer to the set of ASes with at least one RIPE Atlas probe as ``RA'' monitors, which in our experiments account for 3420 ASes. 

\extratext{This approach combines data plane (traceroute) and control plane (IP-to-ASN mapping) information, and requires active measurements (RIPE Atlas can return a batch of measurements within a few minutes). 
}

\myitem{Pings:} The victim can ping (from its network) an IP address in a remote AS (monitor); if the ping response returns to the victim's network, then the AS can be inferred as not infected (see, e.g., the techniques of~\cite{de2017broad, Zhang-Ispy-CCR-2008}). This inference can be correct in blackholing and imposture hijacks, but not in MitM.

It is important to note that while the first two measurement approaches can use only the monitors of the public services (RC and RA), in this latter case any AS with a responsive pingable IP address (i.e., almost every AS) can be a monitor. 

\extratext{Finally, this approach is based on data-plane information, and requires active measurements (whose results can typically be returned within a few seconds)}

\subsection{Datasets and Simulation Methodology}\label{sec:simulation-methodology}
To study different impact estimation approaches, we would need ground truth data about hijack events and their impact. 
However, typically this information is 
not publicly reported, and detailed datasets do not exist, to the best of our knowledge
. Hence, we use realistic simulations to generate datasets of different hijack types.

Specifically, we simulate the Internet routing system using a largely adopted methodology~\cite{
Jumpstarting-BGP-sigcomm-2016,
sermpezis2018artemis,sermpezis2019pomacs}: (i) we use the AS-relationship dataset~\cite{AS-relationships-dataset} that contains AS-links and inferred inter-AS economic relationships (\textit{customer to provider}, \textit{peer to peer}), based on which (ii) we build the Internet topology graph representing each AS as a single node (a reasonable assumption for the vast majority of ASes~\cite{muhlbauer2006building}) and (iii) we define the routing policies 
as in the Gao-Rexford model~\cite{stable-internet-routing-TON-2001}, where an AS prefers routes learned from its customers, then its peers, and then its providers
, and (iv) we simulate BGP 
using the simulator of~\cite{sermpezis2019pomacs}. For each hijack type, we run $1000$ scenarios with different \{victim, hijacker\} (
\{V,H\}) pairs. Each RC and RA monitor is represented by the AS that hosts it.



While, admittedly, simulations 
may not generate the exact impact output per \{V,H\} case, they have been shown to capture well the routing decisions for the majority of ASes~\cite{anwar2015investigating,sermpezis2019pomacs}. In this work, we study the statistical characteristics of impact estimation rather than the per case behavior, and thus the involved uncertainty is not expected to significantly affect our findings. Nevertheless, we also conduct real hijacking experiments in the Internet (\secref{sec:real-world-exp}), to validate our methods and findings with real data.

\subsection{Real-world Experiments}\label{sec:real-world-exp}
We conduct hijacking experiments in the real Internet using the PEERING testbed~\cite{schlinker2019peering
}.
PEERING owns ASNs and IP prefixes, and has BGP connections with 
networks in several locations (\emph{sites}) around the world. The experiments consist of the following steps
:

\myitem{Selection of \{V,H\} pair.} We create two virtual ASes, assign to them the ASNs $61574$ and $61575$, and connect them to two distinct sites of the PEERING testbed. We select one of them to be the victim AS (V) and the other the hijacker AS (H).


\myitem{BGP announcements and Hijacking.} We conduct Type-0 hijacks, i.e., we announce the prefix $184.164.243.0/24$ from V, and then announce (i.e., hijack) the same prefix from H.

\myitem{Impact measurement: pings (ground truth).} To measure the impact of the hijack, we perform a ping campaign: We select 46k ASes and ping (from a host within PEERING) a reachable IP address in each of them  (see~\secref{sec:ping-estimator})
. We monitor through which PEERING site the ping reply returns: if it returns through the H (or, V) site, we consider the corresponding AS as infected (or, not infected). We use this as the ground truth for the hijack impact of each experiment.

\myitem{Impact measurement: public services.} To apply the different measurement approaches 
(\secref{sec:preliminary-measurements}), we conduct data-plane (traceroutes) and control-plane (BGP updates) measurements after the hijacking announcement: (i) We employ traceroutes from RIPE Atlas probes towards the announced prefix. 
We check in the traceroute the last IP address before it enters PEERING. We map this IP address to an AS (using the prefix-to-AS dataset of~\secref{sec:ping-estimator}), and if the ASN belongs to an upstream provider of the H (or, V) site, then we infer that the AS of the RIPE Atlas probe is infected (or, not infected). 
(ii) Using CAIDA's BGPStream tool~\cite{caidabgpstream} we collect BGP updates received by RouteViews and RIPE RIS monitors. From the AS paths in the BGP updates, we extract the origin ASNs and use them to infer to which site the monitor AS routes its traffic. For example, from the AS path $[AS_X, AS_Y, \dots, AS_H]$, we infer that the monitor $AS_X$ is infected.

In total, we considered a set of 6 PEERING sites that: 
were responsive at the time of our experiments, their BGP announcements propagated to the entire Internet
, and they were reachable through data-plane measurements (pings, traceroutes) from the majority of ASes.
We considered all possible combinations of pairs \{V,H\} for these sites. Omitting the experiments in which the hijack impact was trivial ($100\%$ or $0\%$) or very small/large ($>97\%$ or $<3\%$), we end up in a set of $22$ ``valid'' experiments with different \{V,H\} pairs
. 





\section{\mbox{Understanding the Impact Estimation}}\label{sec:understanding}
In this section, we aim to understand the problem of hijack impact estimation through measurements, and provide useful insights for the design of practical estimation methodologies. 

\subsection{Naive Impact Estimation (NIE)}
The most intuitive approach to estimate the impact of a hijack is to measure a set of monitors, and estimate it as the fraction of infected monitors
. We refer to this approach as the \textit{Naive Impact Estimation}
.

\begin{definition}[Naive Impact Estimator (NIE)]\label{def:naive-impact-estimator}
Let a set of monitors $\mathcal{M}$ ($|\mathcal{M}|=M$), and an indicator $m_{i}$ denoting whether monitor $i\in\mathcal{M}$ is infected ($m_{i}=1$) or not ($m_{i}=0$). The Naive Impact Estimator estimates the hijack impact as 
\begin{equation}\label{eq:definition-nie}
\hat{I}_{NIE(\mathcal{M})} = \textstyle\frac{1}{M}\textstyle\sum_{i\in\mathcal{M}}m_{i}
\end{equation}
\end{definition}

The NIE can be used with any type of measurements (BGP routes, traceroutes, pings) that can provide information to calculate the indicator $m_{i}$. In the following we study the properties and accuracy of NIE, under different types of measurements and monitor sets.

\subsection{Accuracy of the NIE}

\mysubsubsection{1) NIE with Random Set of Monitors}

In the following theorem, we prove that, when the set of monitors $\mathcal{M}$ is randomly selected, the NIE is an \textit{unbiased estimator}, and we derive an expression for its RMSE that is a function of the number of monitors $M$ and the hijack type
.

\begin{theorem}\label{thm:NIE-bias-rmse}
Under a randomly selected set of monitors $\mathcal{M}$, the bias and root mean square error of NIE are given by
\begin{equation*}
Bias_{NIE} = 0 
\text{ \hspace{1cm} }
RMSE_{NIE} = \textstyle\frac{1}{\sqrt{M}}\cdot c_{I}
\end{equation*}
where $c_{I}= \int_{0}^{1} \sqrt{I\cdot (1-I)}\cdot f(I)\cdot dI$, is a constant that depends on the impact distribution $f(I)$.
\end{theorem}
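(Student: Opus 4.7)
The plan is to first condition on a fixed hijack (equivalently, a fixed value of the impact $I$) and analyze the conditional behavior of $\hat{I}_{NIE}$, and then average over the distribution $f(I)$ of impacts across hijack scenarios. The key observation is that when $\mathcal{M}$ is drawn uniformly at random from the set of all ASes (with $M$ small relative to the total AS population, so sampling with and without replacement give essentially the same statistics), each indicator $m_i$ is, conditional on $I$, a Bernoulli trial with $\Pr[m_i=1 \mid I] = I$, and the $m_i$'s can be treated as independent.

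First I would establish unbiasedness. By the definition of $I$ as the fraction of infected ASes in the Internet and the uniformity of the random selection, $E[m_i \mid I] = I$. Linearity of expectation then yields $E[\hat{I}_{NIE} \mid I] = \frac{1}{M}\sum_{i\in\mathcal{M}} E[m_i \mid I] = I$, and the tower property of conditional expectation gives $Bias_{NIE} = E[\hat{I}_{NIE} - I] = 0$.

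Next I would compute the conditional variance. Since $M\cdot \hat{I}_{NIE} \mid I \sim \mathrm{Binomial}(M, I)$, we have $\mathrm{Var}(\hat{I}_{NIE} \mid I) = I(1-I)/M$. Combined with zero conditional bias, the conditional root mean square error is $\sqrt{E[(\hat{I}_{NIE}-I)^2 \mid I]} = \sqrt{I(1-I)/M}$. Averaging this per-scenario RMSE over the impact distribution $f(I)$ gives
\[
RMSE_{NIE} \;=\; \int_0^1 \sqrt{\tfrac{I(1-I)}{M}}\, f(I)\, dI \;=\; \tfrac{1}{\sqrt{M}} \int_0^1 \sqrt{I(1-I)}\, f(I)\, dI \;=\; \tfrac{c_I}{\sqrt{M}}.
\]

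The main obstacle, and the step to state carefully, is the interpretation of the expectation used in $RMSE_{NIE}$: the stated formula corresponds to averaging the per-scenario RMSE $\sqrt{E[(\hat{I}-I)^2 \mid I]}$ over $f(I)$, which places the square root inside the integral. A naive reading of $\sqrt{E[(\hat{I}-I)^2]}$ over the joint distribution would instead give $\sqrt{M^{-1}\int I(1-I) f(I)\,dI}$, which differs from the stated expression by Jensen's inequality. A secondary subtlety is the finite-population correction: strictly speaking, random sampling without replacement from $N$ ASes yields a hypergeometric distribution with a $(N-M)/(N-1)$ factor on the variance, but this factor is negligibly close to $1$ for Internet-scale $N$ and the $M$ considered, so the Bernoulli/Binomial approximation is justified.
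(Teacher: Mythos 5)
Your proposal is correct and follows essentially the same route as the paper's proof in Appendix~\ref{appendix:proof-NIE-bias-rmse}: condition on $I$, model the $m_i$ as independent Bernoulli($I$) trials so that $\sum_i m_i$ is Binomial$(M,I)$, derive zero bias and conditional RMSE $\sqrt{I(1-I)/M}$, and then average the per-scenario RMSE over $f(I)$. Your explicit remarks on the square-root-inside-the-integral interpretation and the negligible finite-population correction correctly identify the (implicit) conventions the paper itself adopts.
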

\begin{proof}
The 
proof is given in Appendix~\ref{appendix:proof-NIE-bias-rmse}
. 
\end{proof}

\textit{Remark:} The impact distribution $f(I)$ depends on the \{V,H\} pairs that are expected to be involved in a hijack, and the hijack type.
For example, if any pair of ASes is equally probable to be the \{V,H\} pair, then the impact $I$ of a Type-0 hijack is approximately uniformly distributed in $[0,1]$. 

Table~\ref{tab:hijack-constant} gives the values of the constant $c_{I}$ 
for random \{V,H\} pairs. We also consider scenarios that are closer to reported hijacking activity: namely, scenarios where (i) the \{V,H\} ASes correspond to the events identified as potential hijacks by the BGPmon service in 2018~\cite{
arakadakis2018conext-poster}, and (ii) hijackers are from the set of $22$ ASes classified as ``serial hijackers''~\cite{testart2019profiling} and victims are selected randomly.

\begin{table}[h]
\centering
\caption{Experimentally calculated $c_{I}$ (in parentheses, the average impact $E[I]$) for different hijack types and \{V,H\} pairs.}
\label{tab:hijack-constant}
\vspace{\mytableskip}
\begin{small}
\begin{tabular}{l|ccc}
   & Type-0 & Type-1 & Type-2  \\
\hline
random \{V,H\} pairs         & 0.39 (0.50) & 0.36 (0.30) & 0.31 (0.19)\\
BGPmon \{V,H\} pairs~\cite{arakadakis2018conext-poster}         & 0.35 (0.43) & 0.29 (0.26) & 0.22 (0.17)\\
random V, ``serial'' H~~\cite{testart2019profiling}  & 0.37 (0.68) & 0.40 (0.49) & 0.36 (0.31)
\end{tabular}
\end{small}
\end{table}

\textit{Remark:} The RMSE($I$) is not equal for all values of the real impact $I$ (see Appendix~\ref{appendix:proof-NIE-bias-rmse}). In fact, it is a concave function with a maximum $\frac{1}{2\cdot\sqrt{M}}$ at $I=0.5$ , and minimum $0$ at the corner cases of $I=0$ or $1$. In other words, it becomes more difficult to estimate with high accuracy when the victim and hijacker attract similar fractions of AS routes. This explains the lower values of $c_{I}$ for higher-$N$ hijack types: as $N$ increases, the mass of the impact distribution concentrates around smaller values --closer to $I=0$-- in the area in which the RMSE($I$) takes low values.

Figure~\ref{fig:RMSE-NIE-random-RC-RA-vs-nb-monitors-Type0} shows the RMSE of NIE with a random set of monitors (dashed line) for hijacks of Type-0, as calculated from Theorem~\ref{thm:NIE-bias-rmse} (the simulation results for random monitors almost coincide with the theoretical, i.e., the dashed line, and thus are omitted). The curves for other hijack types and/or \{V,H\} pairs have the same shape, since only the multiplicative factor $c_{I}$ changes (see Table~\ref{tab:hijack-constant}).
\begin{figure}
\centering
\subfigure[theory \& simulations]{\includegraphics[width=0.49\linewidth]{./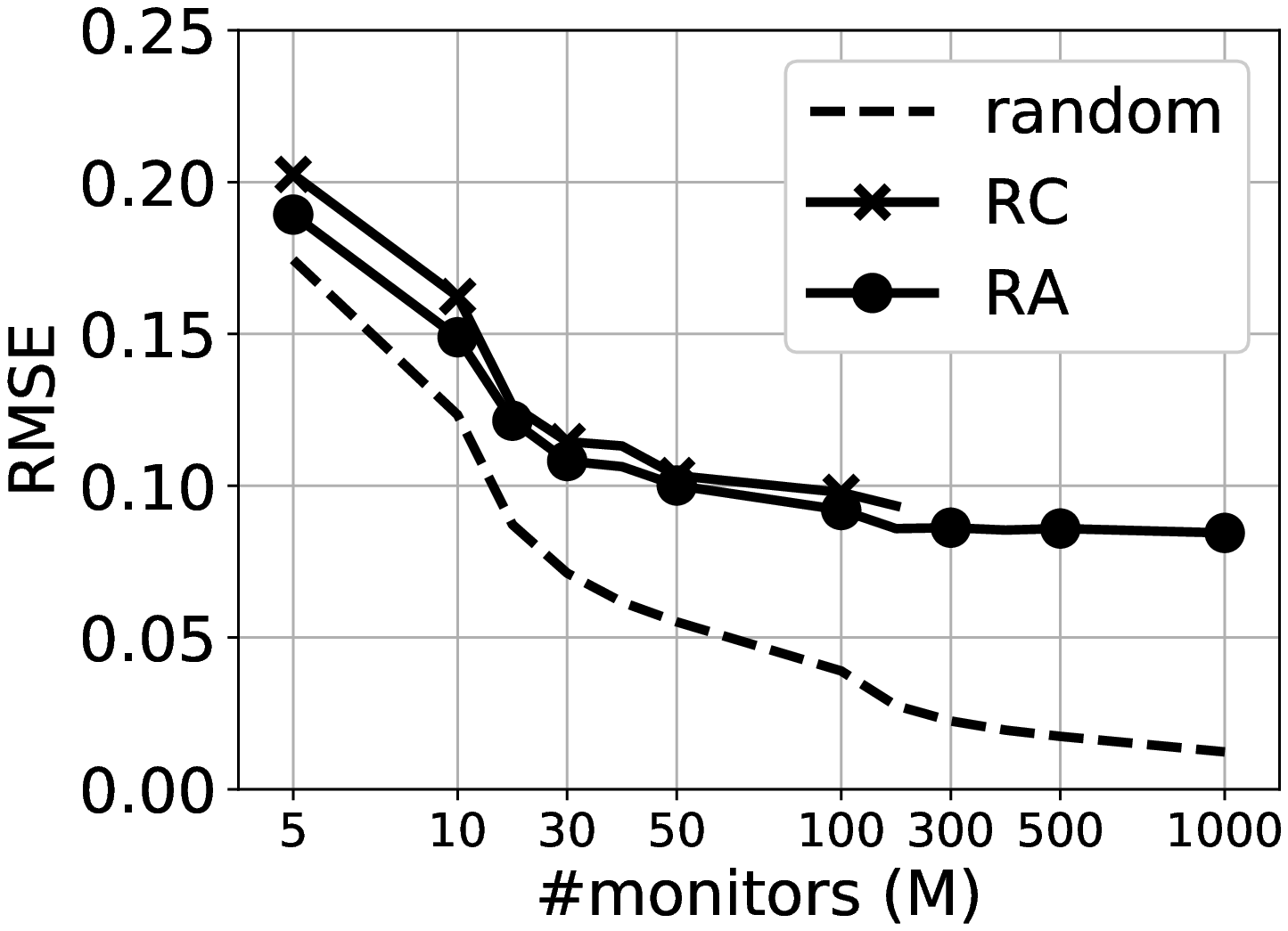}\label{fig:RMSE-NIE-random-RC-RA-vs-nb-monitors-Type0}}
\subfigure[PEERING experiments]{\includegraphics[width=0.49\linewidth]{./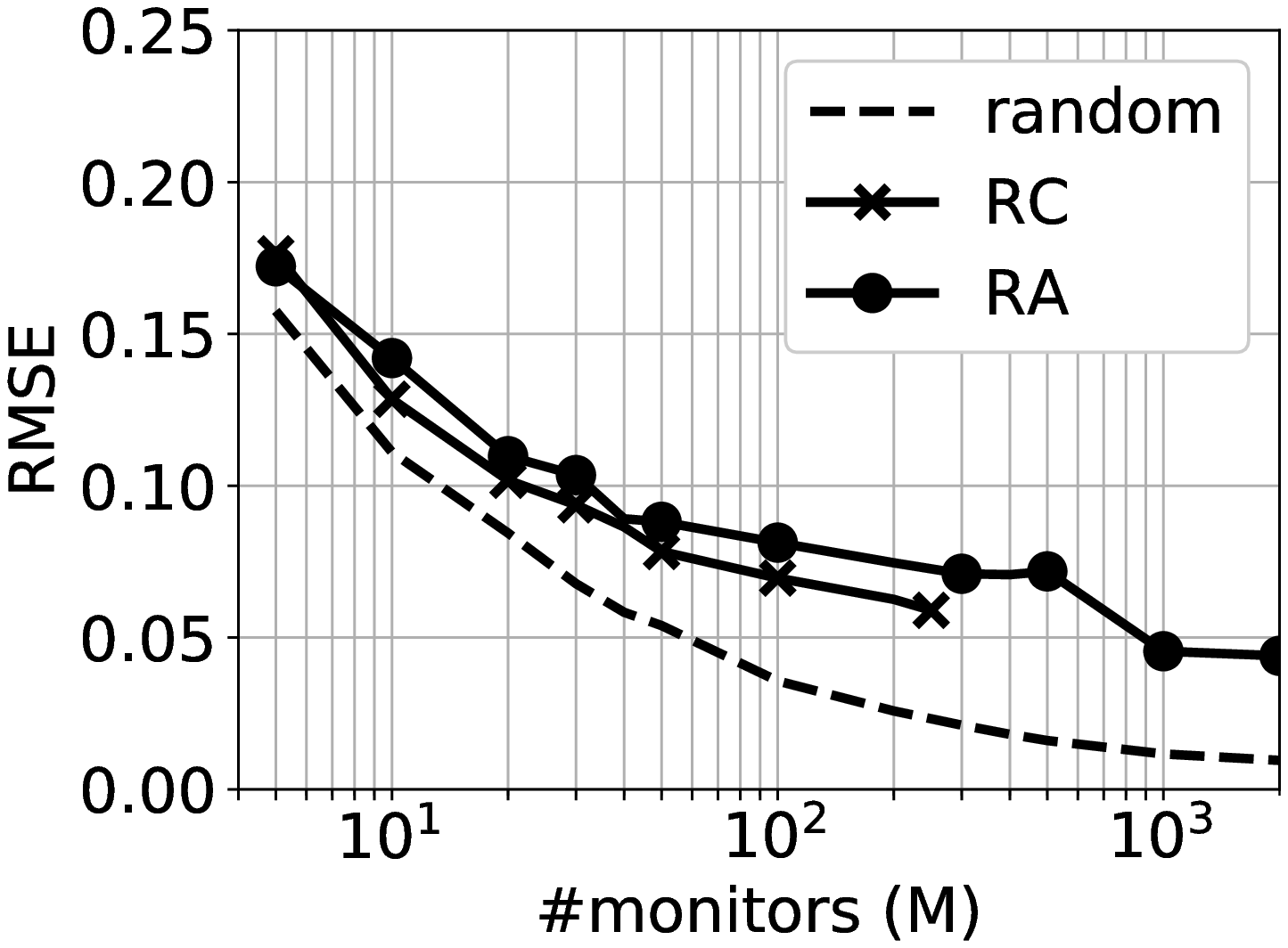}\label{fig:RMSE-NIE-random-RC-RA-vs-nb-monitors-Type0-PEERING}}
\vspace{\myfigureskip}
\caption{RMSE of the NIE (y-axis) vs. number of monitors (x-axis) for Type-0 hijacks and \textit{random}, \textit{RC}, and \textit{RA} sets of monitors. (\textit{Note the different x-axes})}
\label{fig:RMSE-NIE-vs-nb-monitors}
\end{figure}

\mysubsubsection{2) NIE with Monitors from Public Measurement Services}

We now investigate the accuracy of NIE when using the public infrastructure monitors RC and RA (\secref{sec:preliminary-measurements}), which are \textit{not} uniformly located in the Internet~\cite{riperis,routeviews}

\myitemit{Key finding:} \textit{The (non-uniform) locations of the public infrastructure monitors heavily affect the accuracy of NIE.}

Table~\ref{tab:NIE-with-public-monitors}, where we compare simulation results for the RMSE of the NIE using public monitors vs. random sets of (equal number of) monitors, clearly demonstrates that the accuracy heavily depends on the set of employed monitors.



\begin{table}[h]
\caption{RMSE for NIE with public monitors (and random set of monitors) for different hijacks types; simulation results.}
\label{tab:NIE-with-public-monitors}
\centering
\vspace{\mytableskip}
\begin{small}
\begin{tabular}{l|ccc}
{}   &  {Type-0}     &   {Type-1}    & {Type-2}\\
\hline
{RC, 228 monitors}        &  {10\% (2.6\%)}   &   {~9\% (2.4\%)}  & {~8\% (2.1\%)}\\
{RA, 3420 monitors}        &  {~9\% (0.7\%)}   &   {~8\% (0.6\%)}  & {~7\% (0.5\%)}
\end{tabular}
\end{small}
\end{table}

A NIE using the monitors of the public services has a RMSE of 9\%-10\% for Type-0 hijacks, while the same estimator with randomly selected monitors would achieve almost 5 times (2.6\%) and 10 times ($<1\%$) lower RMSE (for the same number of monitors); similar results hold for all hijack types. It is interesting to observe that despite the fact that there are an order of magnitude more data-plane monitors (RA) than control-plane monitors (RC), the accuracy is very similar: RA achieves only 1\% lower RMSE than RC.%
\footnote{
While studying the geographical distribution of RC/RA monitors is out of our scope, Appendix~\ref{appendix:monitors-location} gives some results on its effect on the NIE accuracy.
}


\myitemit{Key finding:} \textit{Measuring 50 monitors of public services, is typically enough for achieving close to the(ir) highest accuracy
.}

Figure~\ref{fig:RMSE-NIE-random-RC-RA-vs-nb-monitors-Type0} compares the RMSE of NIE with random, RC, and RA monitors for Type-0 hijacks as a function of the number of monitors (in the RC and RA cases, we select a random subset of size M in each simulation). We observe that the RMSE of NIE with RC or RA reaches the plateau of around 10\% after 30-50 monitors; for the same $M$, the RMSE of NIE with random monitors is two times lower (around 5\%) and further decreases with the number of monitors. Similar findings hold for the case of hijack Types-1 and 2 as well. 

The experimental results (Fig.~\ref{fig:RMSE-NIE-random-RC-RA-vs-nb-monitors-Type0-PEERING}) are in line with the simulations: (i) public monitors perform consistently worse than random monitors, (ii) the RC and RA curves are similar, and (iii) $M=50$ monitors already achieve 7-8\% RMSE. Note that we use the experiments only for a qualitative validation; the limited number of possible experiments, cannot provide strong statistical significance for the actual RMSE values (e.g., confidence intervals for $M=100$ are $\pm2.7\%$ and $\pm3.4\%$ for RC and RA, respectively).




\subsection{Designing Impact Estimation Methods}
Below we discuss some practical aspects on the implementation of an impact estimator, which --in combination with the above findings-- drive our design for the hijack impact estimation methodologies 
in~\secref{sec:ping-estimator} and~\secref{sec:ridge-estimator}.



\myitem{Random monitors vs. Public infrastructure.} Our results show that selecting monitors randomly (among all ASes in the Internet) results in significantly lower error
. Thus, random monitors are preferable when accuracy is the main goal. However, this approach can be implemented only with ping measurements, since there are no public monitors in all ASes.

\myitem{Ping campaigns: challenges and limitations.} Measuring with pings whether a monitor is infected has some challenges in practice. Pinging an IP address does not necessarily mean that it will reply; in fact, a very small fraction of the addresses in the IP space respond to pings~\cite{pingable-ip-2013,heidemann2008census}. While there are lists of ``pingable'' IP addresses per AS~\cite{antlabhitlists}, they still not always respond to pings. If a monitor does not reply to a ping, we may falsely infer the monitor as infected, and thus overestimate the impact of the hijack. To overcome this challenge, in~\secref{sec:ping-estimator} we first study and quantify the effect of ping failures, and then design a methodology that can still be accurate, by carefully selecting the set and number of IPs per AS to ping.

Finally, a limitation of the ping measurements approach is that it is not applicable to MitM hijacks: all replies will end up to the victim, thus falsely denoting a monitor as non-infected. This can be only overcome with control-plane approaches.


\myitem{Potential of public infrastructure estimations.} 
Using control-plane information (e.g., BGP updates from RC monitors) applies to any hijack type~\cite{sermpezis2018artemis}. Moreover, it can be real-time~\cite{routeviews,ripe-ris-real-time,caidabgpstream}, and implemented by a third-party (i.e., not necessarily the victim network). In this context, and since applying the basic NIE with RC monitors leads to lower accuracy, in \secref{sec:ridge-estimator} we design an estimator more sophisticated than NIE, which uses public infrastructure monitors and achieves comparable performance to the ping-based estimations.


\section{Impact Estimation with Pings
}\label{sec:ping-estimator}

We propose a hijack impact estimation methodology based on ping campaigns, which is summarized as follows:

\begin{small}
\begin{tcolorbox}
\vspace{-0.5\baselineskip}
\textbf{Ping-IE: Hijack impact estimation with ping campaigns}
\begin{enumerate}[leftmargin=*]
\item Select randomly a set of $M$ ASes.
\item For each AS, select a set of $N_{IP}$ responsive (``pingable'') IP addresses, ping them, and monitor for the replies. 
\item 
If at least one IP address of an AS $i$ replies to the ping, then set $\hat{m}_{i}=0$, otherwise set $\hat{m}_{i}=1$.
\item Estimate the hijack impact from the NIE expression in \eq{eq:definition-nie}, by using $\hat{m}_{i}$ instead of $m_{i}$.
\end{enumerate}
\vspace{-0.8\baselineskip}
\end{tcolorbox}
\end{small}
Despite the simplicity of the Ping-IE steps, the accuracy heavily depends on the parameters $M$ and $N_{IP}$, and the set of ``pingable'' IP addresses. In the remainder, we study the expected accuracy and how to carefully tune these parameters.

\subsection{The Effect of Failed Measurements on the NIE}
Let assume that we conduct ping measurements to an IP address in the AS $i$ to infer if it is infected (no ping reply received) or not (ping reply received). However, if AS $i$ is not infected, but the selected IP address is configured to not reply to pings, or for some other reason unrelated to the hijack a ping reply never reaches our system, then we will incorrectly infer that the AS $i$ is infected. If this happens with several ASes, the NIE will overestimate the hijack impact.

The following theorem quantifies the introduced bias (i.e., overestimation of hijack impact) and the RMSE of the NIE as a function of the measurement failure probability.

\begin{definition}[Measurement failure probability]\label{def:ping-failure}
Let $m$ be an indicator that denotes whether a monitor is infected ($m=1$) or not ($m=0$), and $\hat{m}$ be its measured value. The measurement failure probability, $p$, is the probability of measuring as infected a non-infected monitor, i.e., 
\[p = Prob\{\hat{m}=1|m=0\}\]
\end{definition}

\begin{theorem}\label{thm:NIE-bias-rmse-p}
Under a randomly selected set of monitors $\mathcal{M}$, and a measurement failure probability $p$, it holds for NIE:
\begin{align*}
& Bias_{NIE} = c_{I}^{'}\cdot p\\
& RMSE_{NIE} = \int_{0}^{1} \sqrt{\textstyle\frac{A_{I,p}}{M} + B_{I,p}}\cdot f(I)\cdot dI
\end{align*}
where $c_{I}^{'} = 1-E[I]$ is a constant that depends on the impact distribution $f(I)$, and $A_{I,p}$ and $B_{I,p}$ are given by
\begin{align}
A_{I,p} &= \left(I+(1-I)\cdot p\right)\cdot(1-I)\cdot (1-p)\nonumber\\
B_{I,p} &= (1-I)^{2}\cdot p^{2}\nonumber
\end{align}
\end{theorem}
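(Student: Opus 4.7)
The plan is to mirror the structure of the proof of Theorem~\ref{thm:NIE-bias-rmse}, working first conditionally on the realized impact $I$ and integrating against $f(I)$ only at the end; the single new ingredient is that the estimator is now built from the \emph{measured} indicators $\hat{m}_{i}$ rather than the true $m_{i}$. Because a measurement failure can only flip a non-infected monitor into one that looks infected (Definition~\ref{def:ping-failure}), I would start from
\begin{equation*}
P\{\hat{m}_{i}=1\mid I\}=P\{m_{i}=1\mid I\}+P\{m_{i}=0\mid I\}\cdot p=I+(1-I)\,p,
\end{equation*}
using the fact that a monitor drawn uniformly at random from all ASes is infected with probability $I$. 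Conditionally on $I$ each $\hat{m}_{i}$ is therefore Bernoulli with parameter $q_{I,p}\equiv I+(1-I)p$, and its variance is $q_{I,p}(1-q_{I,p})=(I+(1-I)p)(1-I)(1-p)=A_{I,p}$.

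For the bias, linearity gives $E[\hat{I}_{NIE}\mid I]=q_{I,p}$, so the conditional bias is $(1-I)\,p$; averaging over $f(I)$ yields $Bias_{NIE}=p\cdot(1-E[I])=c_{I}^{'}\cdot p$, which matches the first claim.

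For the RMSE I would keep the same averaging convention as Theorem~\ref{thm:NIE-bias-rmse}, namely averaging the conditional RMSE against $f(I)$. Assuming --- as in Theorem~\ref{thm:NIE-bias-rmse} --- that the monitor sampling makes the $\hat{m}_{i}$ conditionally i.i.d.\ given $I$, the conditional variance of $\hat{I}_{NIE}=\tfrac{1}{M}\sum_{i}\hat{m}_{i}$ is $A_{I,p}/M$, while the squared conditional bias is $((1-I)p)^{2}=B_{I,p}$. The bias--variance decomposition then gives
\begin{equation*}
E\!\left[(\hat{I}_{NIE}-I)^{2}\mid I\right]=\frac{A_{I,p}}{M}+B_{I,p},
\end{equation*}
and taking the square root followed by the expectation over $f(I)$ produces the stated integral.

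The main conceptual departure from Theorem~\ref{thm:NIE-bias-rmse} --- and the step I expect to be the trickiest to present cleanly --- is the appearance of the $B_{I,p}$ term, which is an irreducible error that does not vanish as $M\to\infty$ but is controlled only by the failure probability $p$; this is precisely why the square root cannot be pulled outside the integral and why the RMSE no longer factors as $M^{-1/2}$ times a constant as in the $p=0$ case. The other mild issue, inherited from Theorem~\ref{thm:NIE-bias-rmse}, is the conditional i.i.d.\ approximation for the sampled monitors, which I would either state as an assumption (random sampling with replacement) or justify by noting that the population of ASes is orders of magnitude larger than any $M$ of practical interest, so that sampling without replacement is essentially indistinguishable from sampling with replacement.
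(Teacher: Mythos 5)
Your proposal is correct and follows essentially the same route as the paper's Appendix proof: both derive $P\{\hat{m}_{i}=1\mid I\}=I+(1-I)p$, apply the bias--variance decomposition conditionally on $I$ to get $A_{I,p}/M+B_{I,p}$, and integrate the conditional RMSE against $f(I)$ at the end. Your observation that $B_{I,p}$ is an irreducible, $M$-independent error floor is exactly the content the paper packages as Corollary~\ref{corollary:rmse-inf}.
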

\begin{proof}
The proof is given in Appendix~\ref{appendix:proof-NIE-bias-rmse-p}.
\end{proof}

\begin{corollary}\label{corollary:rmse-inf}
$RMSE_{NIE} \geq RMSE_{NIE}(M\rightarrow \infty) = c_{I}^{'}\cdot p$.
\end{corollary}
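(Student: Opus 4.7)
The plan is to derive the corollary directly from the RMSE expression stated in Theorem~\ref{thm:NIE-bias-rmse-p}, by observing that the integrand is monotone in $M$ and by computing the pointwise limit as $M \to \infty$.

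First I would verify that for every $I \in [0,1]$ and every $p \in [0,1]$, the coefficient $A_{I,p} = (I + (1-I)p)(1-I)(1-p)$ is nonnegative, since each of its three factors lies in $[0,1]$. Therefore $\frac{A_{I,p}}{M} \geq 0$, and the integrand in Theorem~\ref{thm:NIE-bias-rmse-p} satisfies
\begin{equation*}
\sqrt{\tfrac{A_{I,p}}{M} + B_{I,p}} \;\geq\; \sqrt{B_{I,p}} \;=\; (1-I)\cdot p,
\end{equation*}
where in the last equality I use that $1-I \geq 0$ so the principal square root of $B_{I,p} = (1-I)^2 p^2$ is $(1-I)p$. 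Since $f(I) \geq 0$, integrating this inequality against $f(I)\,dI$ preserves it.

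Next I would compute the lower bound explicitly:
\begin{equation*}
\int_{0}^{1} (1-I)\cdot p \cdot f(I)\, dI \;=\; p\cdot\bigl(1 - E[I]\bigr) \;=\; c_{I}^{'}\cdot p,
\end{equation*}
using the definition $c_{I}^{'} = 1 - E[I]$ from Theorem~\ref{thm:NIE-bias-rmse-p}. This already establishes $RMSE_{NIE} \geq c_{I}^{'}\cdot p$.

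Finally, to identify this lower bound with $RMSE_{NIE}(M \to \infty)$, I would take the pointwise limit as $M \to \infty$ inside the integral. Since $\tfrac{A_{I,p}}{M} \to 0$ and the integrand is bounded above on $[0,1]$ by a constant (e.g., by $\sqrt{A_{I,p} + B_{I,p}} \leq 1$ for $M \geq 1$), dominated convergence justifies interchanging limit and integral, yielding $\int_0^1 \sqrt{B_{I,p}}\, f(I)\, dI = c_I' \cdot p$. The main (minor) obstacle is just justifying this interchange of limit and integral, which is immediate from boundedness of the integrand; no delicate estimates are needed beyond the nonnegativity of $A_{I,p}$.
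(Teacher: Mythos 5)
Your proposal is correct and follows exactly the route the paper intends: the corollary is an immediate consequence of the RMSE expression in Theorem~\ref{thm:NIE-bias-rmse-p}, since $A_{I,p}\geq 0$ makes the integrand at least $\sqrt{B_{I,p}}=(1-I)p$ for every $M$, and the $A_{I,p}/M$ term vanishes in the limit, leaving $\int_0^1 (1-I)\,p\,f(I)\,dI = (1-E[I])\,p = c_I'\cdot p$. The dominated-convergence remark is a harmless formality (monotone convergence in $M$ would do equally well); nothing is missing.
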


The values $c_{I}^{'}=1-E[I]$ for the different hijack types and \{V,H\} pairs can be calculated from Table~\ref{tab:hijack-constant}. For example, for random \{V,H\} pairs and hijack Types-0, 1, and 2, the constant $c_{I}^{'}$ is 0.5, 0.7, and 0.81, respectively.
\textit{Remark:} It is interesting to note that the RMSE of NIE increases with the hijack type when $p>0$, whereas the opposite holds when $p=0$ (see~\secref{sec:understanding}). This is due to the fact that $p$ affects only the monitors that are \textit{not} infected, and thus in cases where the impact is lower, i.e., for higher hijack types, the error due to ping failures is higher.




\myitemit{Key finding:} \textit{When the failure probability is larger than 20\% ($p\geq 0.2$), ping campaigns --no matter how many ASes are pinged-- are less accurate than NIE with RC or RA monitors.} 

As expected, NIE under ping failures becomes a biased estimator. Also its RMSE increases with the failure probability $p$, which means that for high $p$ the NIE with ping campaigns becomes worse than the NIE with public services. To better quantify the expected estimation error, we present in Fig.~\ref{fig:sims-theory-rmse-nie-p-type-0} the RMSE for hijacks of Type-0 (for Types $\geq1$ the RMSE is higher) as a function of the failure probability $p$ (x-axis) and the number of monitors $M$ (i.e., pinged ASes).

\begin{figure}
\centering
\subfigure[]{\includegraphics[width=0.49\linewidth]{./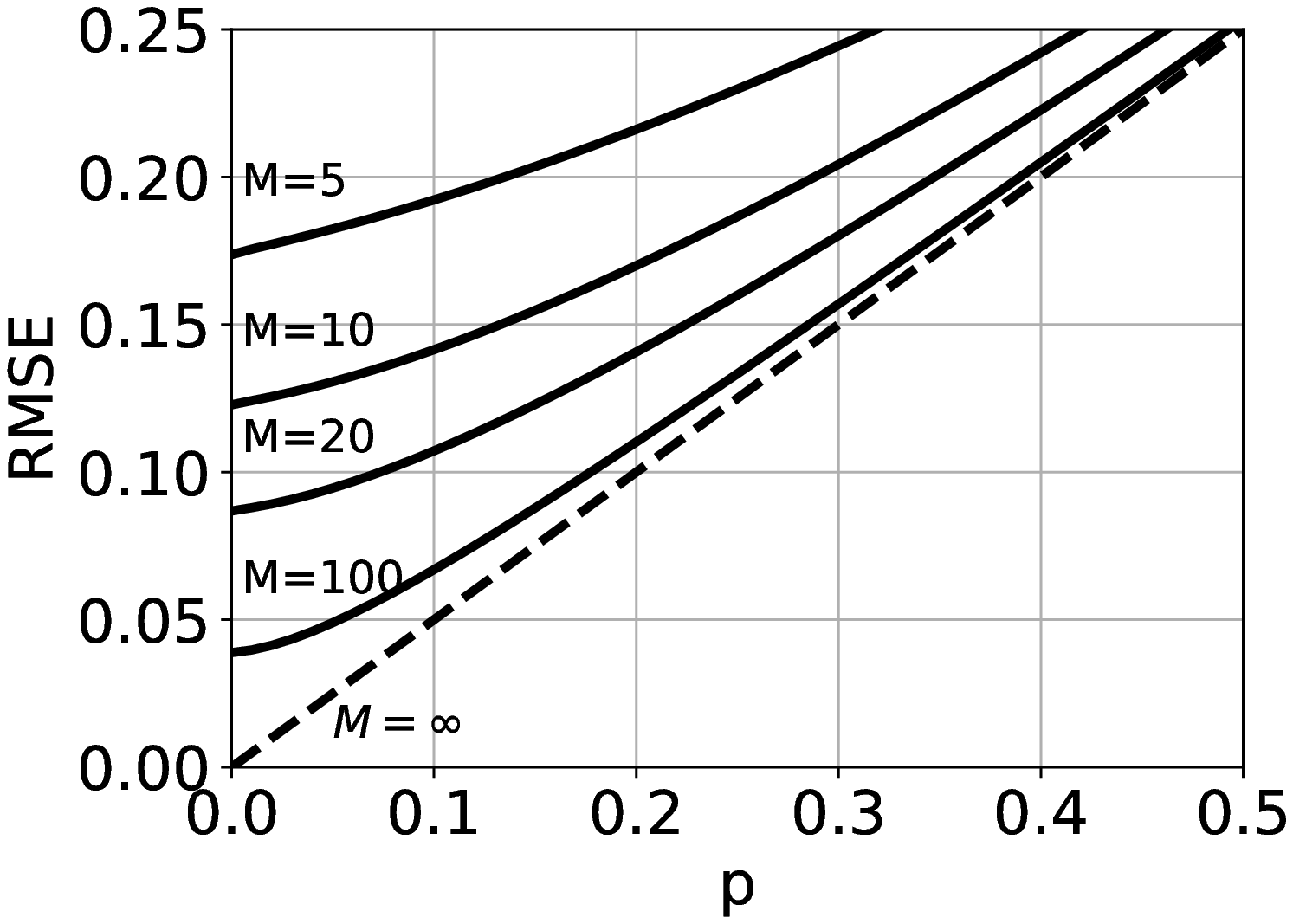}\label{fig:sims-theory-rmse-nie-p-type-0}}
\subfigure[]{\includegraphics[width=0.49\linewidth]{./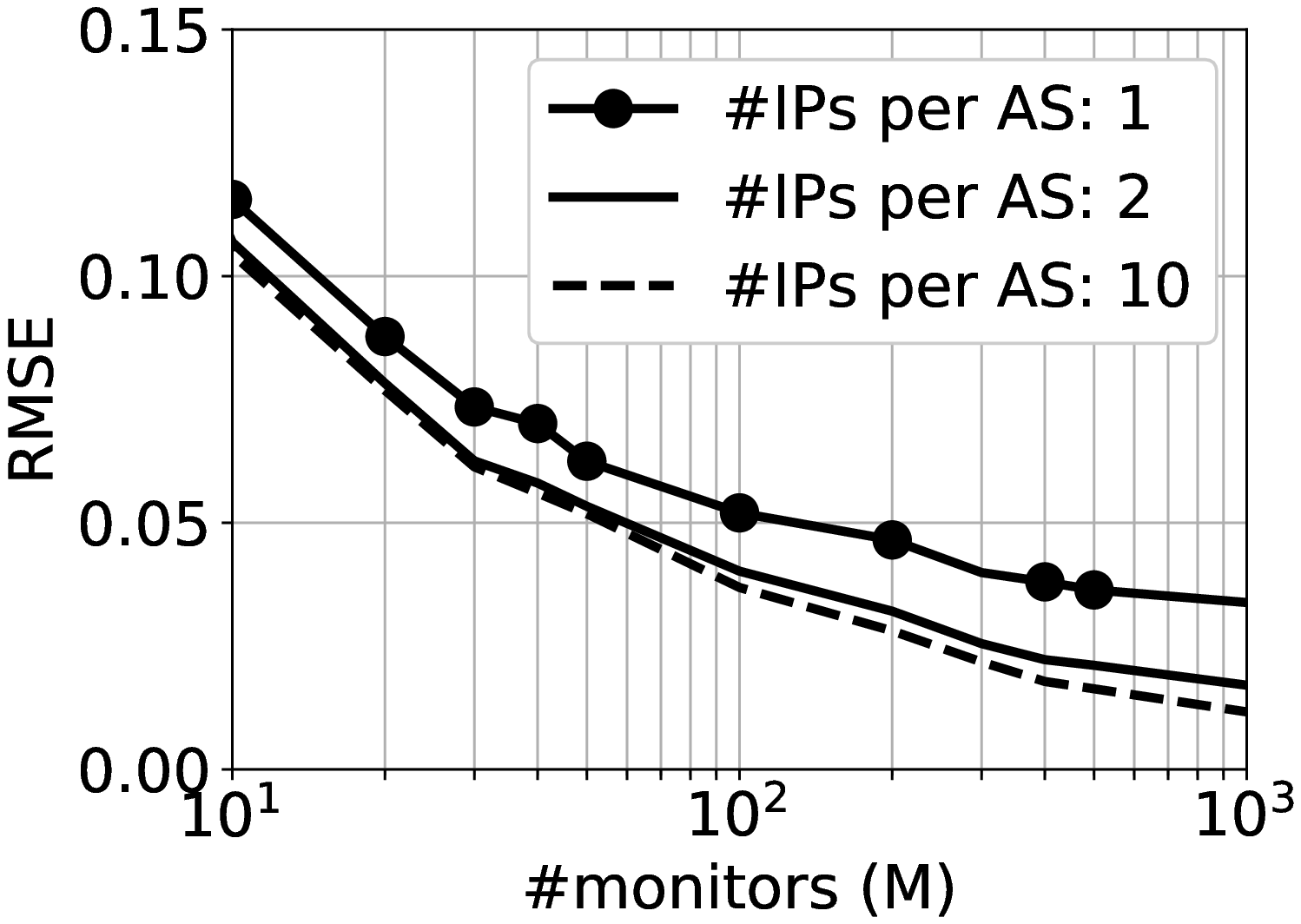}\label{fig:rmse-nie-pg-vs-nb-pings-experiments}}
\vspace{\myfigureskip}
\caption{RMSE (y-axis) of the NIE / Ping-IE under random set of $M$ monitors: (a) theoretical results for the NIE ($N_{IP}=1$) under different ping failure probabilities $p$ (x-axis); (b) results from the PEERING experiments for the Ping-IE under different number of monitors (x-axis) and $N_{IP}$ (legend).
}
\label{fig:nie-ping}
\end{figure}

Based on the analytical findings, we proceed to design and fine tune a methodology by considering practical issues.

\vspace{-0.5\baselineskip}
\subsection{Practical Design of the Ping-IE}
The above results indicate that for a low RMSE, we need to have low ping failure probabilities $p$. One can achieve this by (i) carefully selecting the set of IPs to be pinged so that they are ``pingable'', and/or (ii) pinging more than one IPs per AS and waiting for a reply from at least one IP. In the methodology we propose, we do both. 
Specifically, we first quantify the ping failure probability we expect to have in practice, and based on this, we select the set of IP addresses (which IP addresses and how many per AS) to be pinged.

\myitem{Selection of pingable IP addresses.}  If we select arbitrarily an IP address within the prefixes of an AS, then the failure probability is very large (more than 90\%~\cite{pingable-ip-2013,heidemann2008census})
; this would lead to a very inefficient methodology with $RMSE>45\%$. Therefore, we use a list of IP addresses provided by ANT Lab~\cite{antlabhitlists} that have high probability of replying to pings. Specifically, we compile a list of pingable IP addresses per AS, by combining the following two datasets:
\begin{itemize}[leftmargin=*]

\item {\textit{IP hitlists from ANT Lab}}~\cite{antlabhitlists}: These are lists of IP addresses that are found to be reachable via \texttt{ping} with high probability, based on past measurements. We select only the IP addresses with more than 90\% confidence score.

\item {\textit{Prefix-to-AS mapping:}} We consider information from RIPE RIS~\cite{riperis} (via the RIPEstat API~\cite{ripestat}) and RouteViews~\cite{routeviews} (via CAIDA's \texttt{pfx2as}~\cite{caidapfx2as}). These datasets map IP prefixes announced in BGP to the originating ASNs. We filter out mappings that are inconsistent within a period of a month (e.g., due to transient incidents), and merge the two datasets.
\end{itemize}


\myitem{Selection of the number of IP addresses to ping per AS ($N_{IP}$).} Pinging more than one IP addresses per AS, increases the probability to obtain a correct inference about whether it is infected (i.e., we need at least one ping reply to infer an AS as non-infected
). But, \textit{how many IP addresses need to be pinged per AS to have a low error}? 
To quantify this, we conduct a set of measurements using the PEERING testbed: we announce a prefix from PEERING, ping from a host within PEERING the top 10 pingable IP addresses per AS 
for $\sim46k$ ASes, and monitor which IP addresses reply to the pings.

\myitemit{Key finding:} \textit{To achieve low estimation error we need to ping at least 2 IP addresses from the ANT Lab's IP hitlists~\cite{antlabhitlists} per AS. Pinging more than 3 IP addresses, does not significantly improve accuracy.}

Table~\ref{tab:ping-failure-probability} (top row) presents the fraction of ASes for which we did not receive any reply from pinging their top-$x$ ($x=1,...,10$) IP addresses. We can see that the failure probability is quite high (12.8\%) when pinging only one IP address per AS, which indicates that more measurements per AS are needed to enable an accurate impact estimation. With 3 measurements the ping failure probability decreases to 2.1\% and further decreases gradually to 0\% with 10 pings per AS. The middle rows of Table~\ref{tab:ping-failure-probability} show the \textit{lower bound} for the RMSE ($M\rightarrow\infty$) that can be achieved by Ping-IE for different hijack types, which indicates that only one ping per AS may not be enough to outperform NIE with public monitors. 

Finally, the bottom rows give the RMSE for Type-0 hijacks and practical values of $M$, calculated from the expressions of Theorem~\ref{thm:NIE-bias-rmse-p} (using the values $p$ that correspond to the $N_{IP}$ from the first row of the table). While pinging only 10 ASes is not efficient, pinging 100 ASes already achieves an accuracy relatively close to the best achievable ($M\rightarrow\infty$).

\begin{table}[h]
\caption{Top row: Probability of ping failure for ANT Lab's IP hitlists~\cite{antlabhitlists} (PEERING experiments). Middle/Bottom rows: RMSE of the Ping-IE vs. nb. of pinged top IPs per AS (theory).}
\label{tab:ping-failure-probability}
\centering
\begin{small}
\begin{tabular}{ll|ccccccc}
{}&{} & \multicolumn{5}{c}{\# of pinged IP addresses per AS}\\
{}&{}&  {1}  & {2}   & {3} &{...}&{10}\\
\hline
\multicolumn{2}{l|}{\% ASes with no reply}&  {12.8\%}  & {~4.2\%}   & {~2.1\%} &{...}&{0\%}\\
\hline
\multirow{ 3}{*}{\shortstack[l]{RMSE\\M=$\infty$}}&  Type-0 & {6.4\%} & {2.1\%} & {1.0\%}  & {...} &{0\%}\\
&Type-1& {9.0\%} & {3.0\%} & {1.4\%}  & {...} &{0\%}\\
&Type-2& {10.4\%} & {3.4\%} & {1.7\%}  & {...} &{0\%}\\
\hline
\multirow{ 3}{*}{\shortstack[l]{RMSE\\ Type-0}}& M=10 & {14.9\%} & {12.9\%} & {12.6\%}  & {...} &{12.3\%}\\
{}&M=50 & {9.0\%} & {6.2\%} & {5.7\%}  & {...} &{5.5\%}\\
{}&M=100 & {7.9\%} & {4.7\%} & {4.1\%}  & {...} &{3.9\%}\\
\end{tabular}
\end{small}
\end{table}

In Fig.~\ref{fig:rmse-nie-pg-vs-nb-pings-experiments} we present the corresponding results from the real experiments, which are in agreement with the main theoretical findings\footnote{For small $N_{IP}$ or $M$, the RMSE values in our experiments are a bit lower than the corresponding theoretical values (Table~\ref{tab:ping-failure-probability}); this is due to the small number of experiments (confidence intervals are larger for small $N_{IP}$ or $M$)
.}. In particular, we observe that pinging 2 IP addresses per AS ($N_{IP}=2$), significantly reduces the RMSE compared to the case of $N_{IP}=1$. However, the improvement by further increasing the $N_{IP}$ (up to $N_{IP}=10$) is marginal. Moreover, in all $N_{IP}$ cases, increasing the number of pinged ASes more than $M=500$ barely improves accuracy (as was already indicated by Fig.~\ref{fig:sims-theory-rmse-nie-p-type-0}).

\section{Improving Impact Estimation with Public Infrastructure Monitors}\label{sec:ridge-estimator}
\myitem{The biased view of public monitors.} As already discussed
, public monitors are not uniformly deployed in the Internet, and this increases the error of NIE. Consider the following example scenario: The victim is an AS that is located in a geographical area where many monitors exist (e.g., with direct peering or short paths to these monitors), and the hijacker AS is in a different area with less monitors. The actual impact of the hijacks is $I=50\%$ (i.e., half of all ASes are infected), however, the NIE underestimates the impact (i.e., $\hat{I}_{NIE}<I$) because more monitors would prefer the paths to the victim.

Generalizing the above example, the error of NIE increases when the monitors are not representative of the global connectivity, or --more abstractly-- when there are correlations between their measurements (due to locations, underlying topology, AS-relationships, etc.). Hence, \textit{to improve the estimation accuracy of NIE under public monitors, one needs to take into account the correlations between the monitors}. To this end, in the following we design a statistical learning methodology that exploits information of past events (to identify correlations between public monitors), fits a model that diminishes the effect of correlations, and returns an estimation for the impact. 

\myitem{The linear regression estimator (LRE).} The methodology we propose is summarized as follows:

\begin{small}
\begin{tcolorbox}
\vspace{-0.5\baselineskip}
\textbf{LRE: Linear Regression Estimator}
\begin{enumerate}[leftmargin=*]
\item Compile a dataset from $N$ past events (hijacks, anycast announcements, etc.), where for each event $j$, $j=1,...,N$, collect the measurements $m_{i}^{(j)}$ of the monitors $i\in\mathcal{M}$, and the actual hijack impact $I^{(j)}$.
\item Fit a least squares estimator, by calculating the weights $w_{i}$, $i\in\mathcal{M}$, as:
\begin{equation*}
\textstyle \mathbf{w}\leftarrow \arg\min_{\mathbf{w}} ~\left(||\mathbf{M}\cdot \mathbf{w}- \mathbf{I}||_{2}\right)^{2} + \alpha\cdot \left(||\mathbf{w}||_{2}\right)^{2}
\end{equation*}
where $\mathbf{w} = [w_{1}, ..., w_{M}]$ and $\mathbf{I}=[I^{(1)}, ...,I^{(N)}]$ are vectors, $\mathbf{M}$ is the matrix with elements $m_{i}^{(j)}$ at the $i^{th}$ row and $j^{th}$ column, and $||\cdot||_{2}$ denotes the l2-norm.

\item Estimate the hijack impact from the current monitor measurements $m_{i}$ and the calculated weights $w_{i}$ as
\begin{equation*}
\hat{I} = \textstyle\sum_{i\in\mathcal{M}} m_{i}\cdot w_{i}
\end{equation*}
\end{enumerate}
\vspace{-0.8\baselineskip}
\end{tcolorbox}
\end{small}

The first step is to collect data that contain information about the correlations between the measurements of the different public monitors. To do this, one can consider a set of past/ongoing events $\mathcal{N}$ ($|\mathcal{N}|=N$), where two (at least) ASes announce the same prefix. Such events can be actual or emulated (e.g., as in our experiments;~\secref{sec:real-world-exp}) hijacking events, or legitimate anycasting announcements (which from a routing point of view are equivalent to Type-0 hijacks)~\cite{de2017broad}. For each of these events $j\in\mathcal{N}$, we collect the measurements $m_{i}^{(j)}$ of the public monitors $i\in\mathcal{M}$. In the case of RC monitors the measurements can be retrieved from the the RIPE RIS~\cite{riperis} and the RouteViews~\cite{routeviews} services directly, or from the open-source tool BGPStream~\cite{bgpstream-website, bgpstream-paper} that aggregates these measurements, and in the case of RA from the RIPE Atlas API (by triggering measurements for ongoing events, or collecting the periodic measurements for past events)~\cite{ripeatlas}. Moreover, for each event we need to know the actual hijack impact
, which can be exactly measured with exhaustive ping measurements (similarly to our methodology in \secref{sec:real-world-exp} for collecting the ground-truth in our experiments, or other related approaches~\cite{de2017broad}) or approximated well with a few thousands of ping measurements using the methodology of \secref{sec:ping-estimator}. 

The second step is to identify any correlations between the measurements of the monitors, and eliminate their effect in the estimation. We select to do this, by using a least squares approach, and, in particular, a linear regression estimator with regularization of the weights (i.e., Ridge regression). Our choice, is motivated by the fact that (i) the least square estimator (i.e., linear regression) has the the lowest variance within the class of linear unbiased estimators~\cite{davidson2000econometric}\footnote{We tested several non-linear estimators as well 
(e.g., support-vectors, random forests, neural networks). However, they had similar (or worse) performance to LRE. We selected the LRE, as a simple model, which comes with the advantages of interpretability, need for less training data, 
etc.}, and (ii) the regularization significantly reduces the variance of the estimations when multi-collinearity occurs;  in fact, the public monitor measurements are highly collinear, and thus large values of the regularization parameter $\alpha$ are needed (e.g, we found that values $a\geq 50$ performed best)

Finally, having fitted the model (i.e., the weights $w_{i}$) that eliminates the correlations between monitor measurements, we can apply it to any new hijacking event and estimate its impact.

\myitem{LRE vs. NIE estimation accuracy.} We compare the accuracy of the impact estimations by LRE and NIE with RA monitors in Fig.~\ref{fig:RMSE-ridge-vs-nb-monitors} (similar results hold for the RC monitors). We use $1000$ simulations as the past-events dataset, from which we collect the data to fit the LRE, and apply the LRE and NIE to a different set of $1000$ simulations.

\myitemit{Key finding:} \textit{LRE can eliminate the effect of correlations in public monitor measurements and achieve similar efficiency to the (best performing) ping-based estimators.}

We see that LRE has significantly lower RMSE than NIE. In fact, in the case of Type-0 hijacks (Fig.~\ref{fig:RMSE-ridge-RA-sims-type0}) LRE achieves equal accuracy to the NIE with random monitors (Theorem~\ref{thm:NIE-bias-rmse}), or even better accuracy for small number of monitors. This is an important finding that demonstrates that we can design estimators based on public monitors with similar efficiency to the ping-based estimators. LRE outperforms NIE for the case of Type-2 hijacks as well (Fig.~\ref{fig:RMSE-ridge-RA-sims-type2}), e.g., having almost 50\% less RMSE for $M\geq 50$ monitors. Comparing the RMSE of LRE in the cases of Type-0 and Type-2 hijacks, we can see that it increases with the hijack type; this is due to the fact that the actual impact of higher type hijacks is lower, and thus there are more observations $m_{i}=0$, which makes more difficult for a model to identify the existing correlations in measurements.

\begin{figure}
\centering
\subfigure[Type-0 hijack]{\includegraphics[width=0.49\linewidth]{./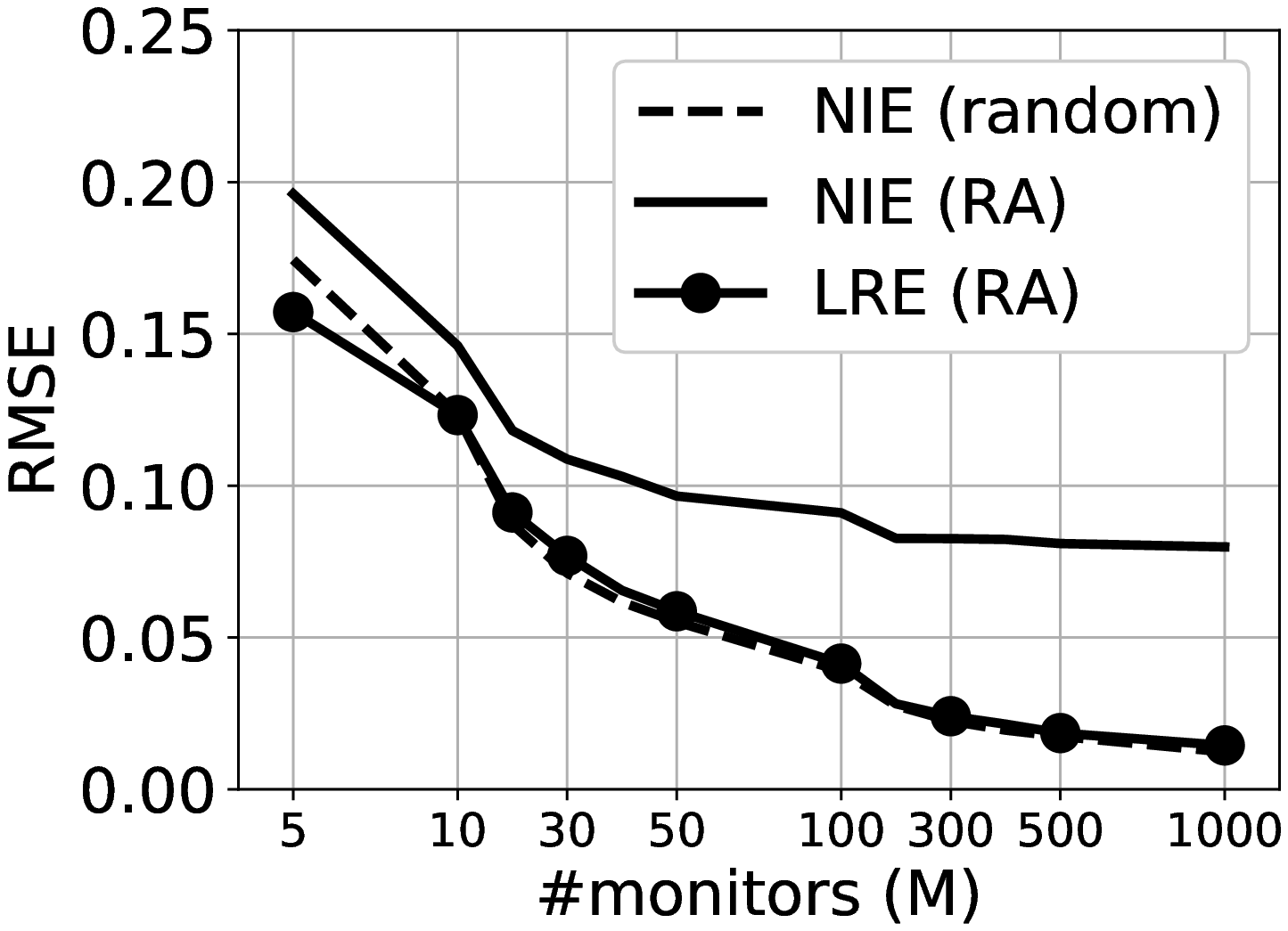}\label{fig:RMSE-ridge-RA-sims-type0}}
\subfigure[Type-2 hijack]{\includegraphics[width=0.49\linewidth]{./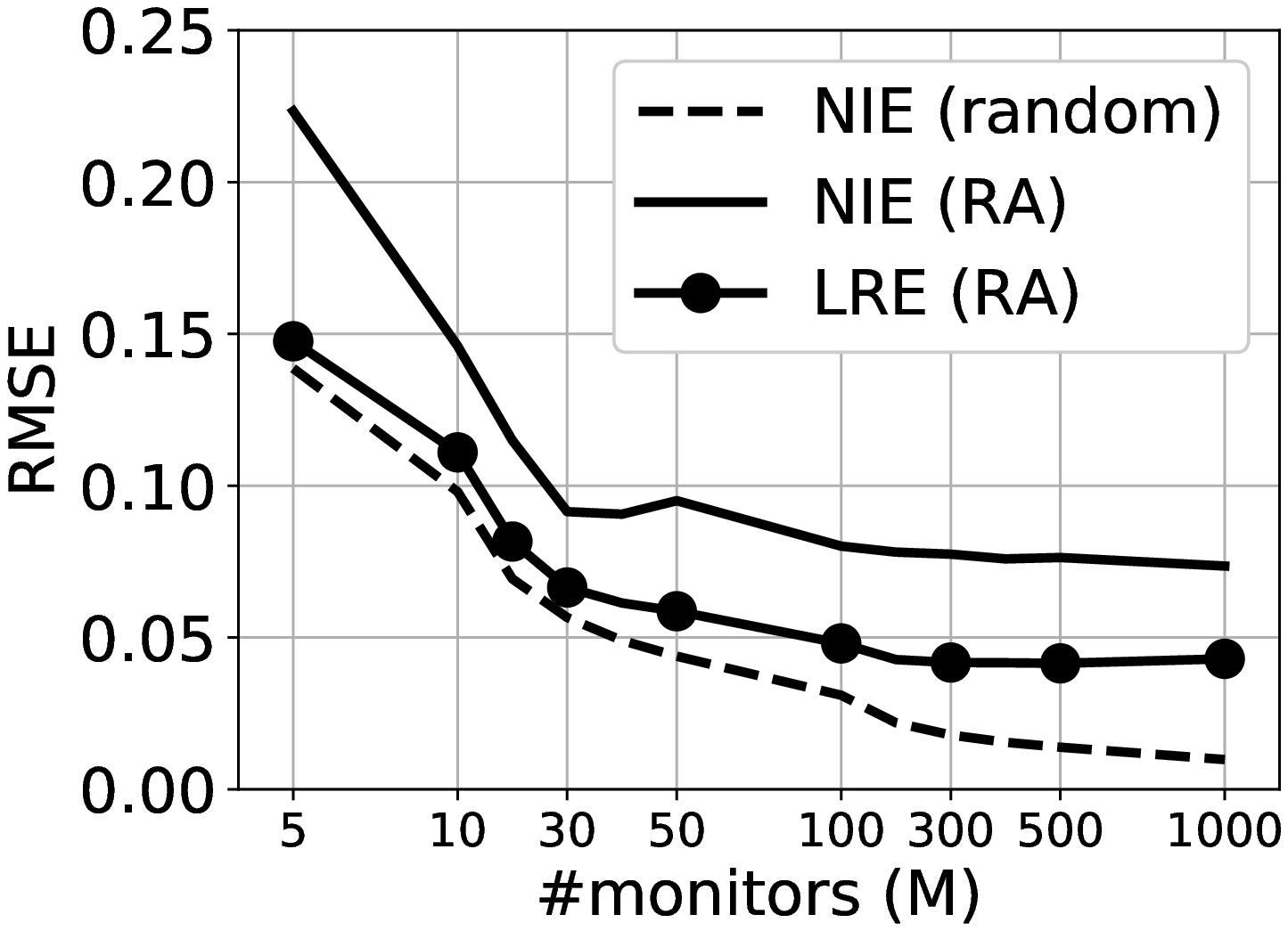}\label{fig:RMSE-ridge-RA-sims-type2}}
\vspace{\myfigureskip}
\caption{RMSE (y-axis) of the LRE with RA monitors and the NIE with random or RA monitors, vs. number of monitors (x-axis), in simulations of (a) Type-0 and (b) Type-2 hijacks.}
\label{fig:RMSE-ridge-vs-nb-monitors}
\end{figure}

We proceed to test the efficiency of LRE in the real experiments with PEERING. We remind that we have only 22 experiments, which is a very small dataset for training a model. Hence, this is not a conclusive evaluation (it can be rather seen as a stress-test of LRE); nevertheless, our findings are very promising. For each experiment \{V,H\}, we consider as the dataset with past events the other 20 experiments (omitting also the experiment \{H,V\}) and fit the LRE
. Figure~\ref{RMSE-ridge-RA-PEERING} presents the results for RC and RA monitors. In the case of RA monitors (Fig.~\ref{fig:RMSE-ridge-RA-PEERING}), we can see that LRE clearly outperforms NIE with RA monitors, despite the very limited available data for training the LRE. In the case of RC (Fig.~\ref{fig:RMSE-ridge-RC-PEERING}), LRE has a similar performance to NIE with RC monitors. These findings verify the efficiency of LRE, and indicate that even information from only a few past events can lead to more accurate estimations. However, they also highlight the importance of collecting past event data for training estimators; our experiments data~\cite{sermpezis2021estimating-code} can contribute to this direction.

\begin{figure}
\centering
\subfigure[RA monitors]{\includegraphics[width=0.49\linewidth]{./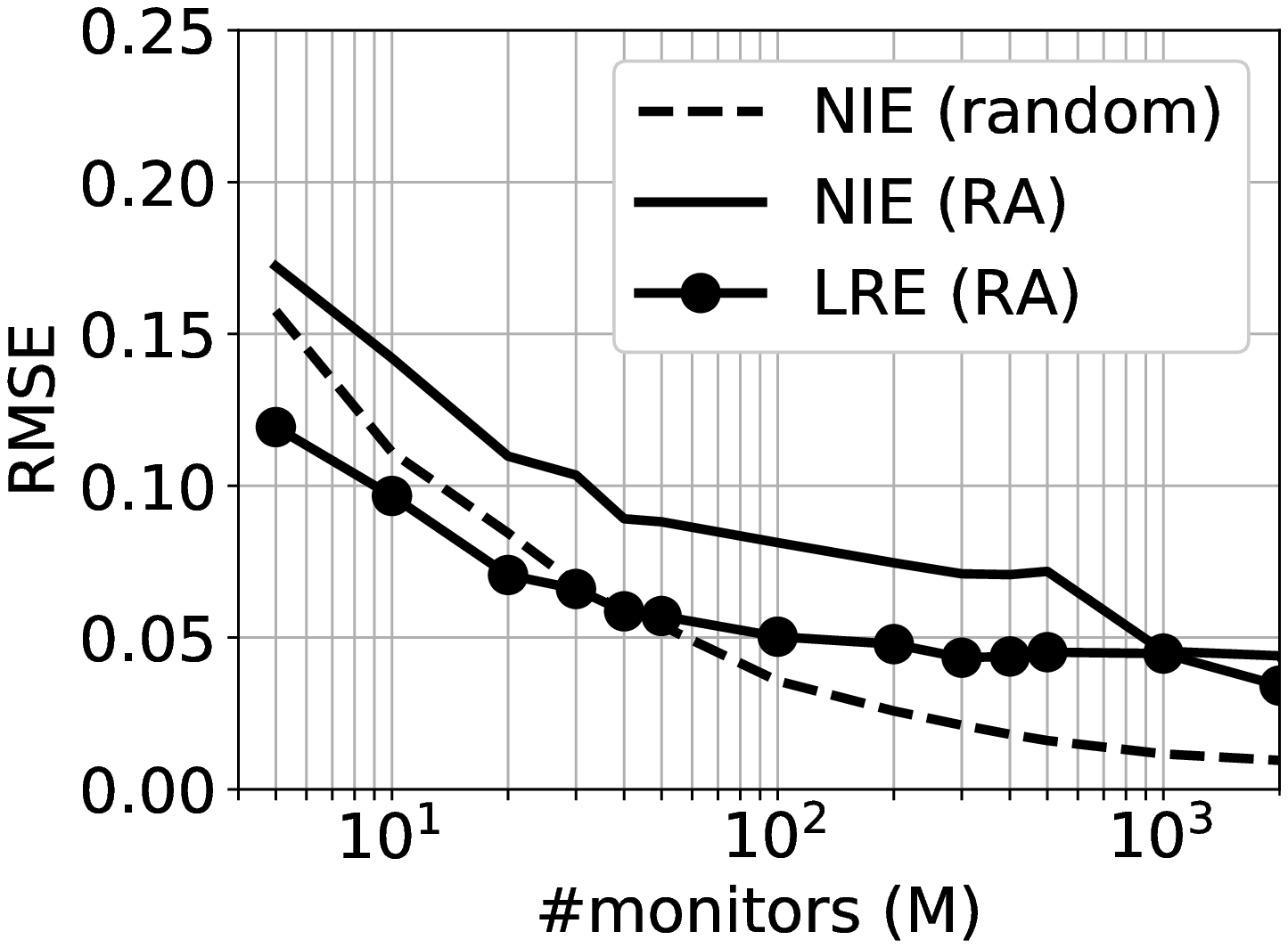}\label{fig:RMSE-ridge-RA-PEERING}}
\subfigure[RC monitors]{\includegraphics[width=0.49\linewidth]{./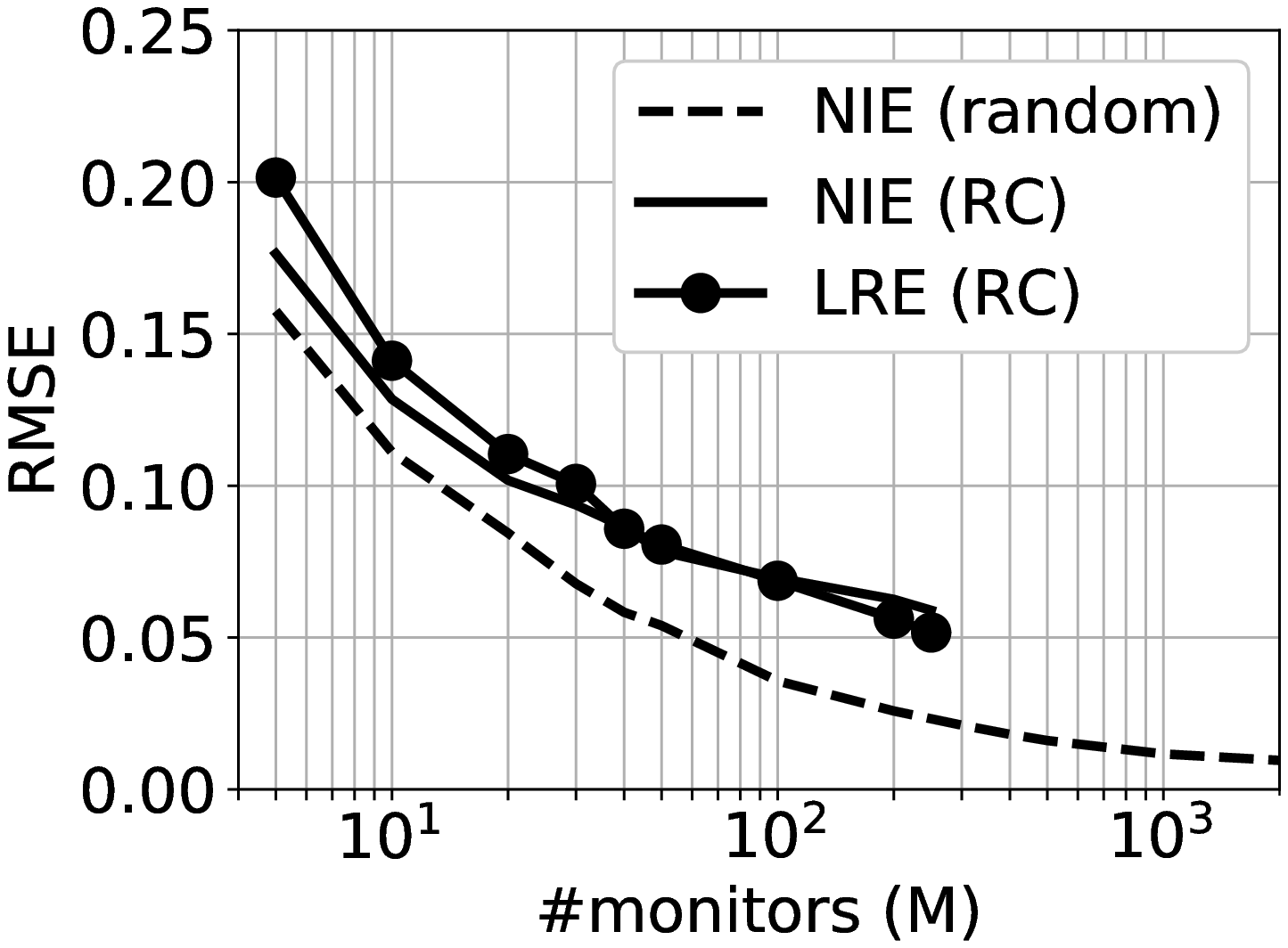}\label{fig:RMSE-ridge-RC-PEERING}}
\vspace{\myfigureskip}
\caption{RMSE (y-axis) of the LRE with public monitors and the NIE with random or public monitors, vs. number of monitors (x-axis) for Type-0 hijacks for (a) \textit{RA} and (b) \textit{RC} monitors in the PEERING experiments.}
\label{RMSE-ridge-RA-PEERING}
\end{figure}

\section{Related Work}\label{sec:related}
The majority of works on BGP prefix hijacking (or other types of events affecting the Internet operation, e.g., outages~\cite{quan2013trinocular,padmanabhan2019find}) focus on the detection of an event, using network measurements on the control plane~\cite{sermpezis2018artemis
} or the data plane~\cite{Zhang-Ispy-CCR-2008
} or both~\cite{heap-jsac2016,
Shi-Argus-IMC-2012
}. The difference between \textit{detection} and \textit{impact estimation} lies in the fact that having information for at least one infected AS 
is typically enough to enable the detection of a hijack, however, it gives only limited information (if any) about its overall impact. Our work focuses on quantifying the impact of a (detected) hijack, and thus complements the existing detection methods.

A few works studying (through simulations) the hijack impact focus mainly on the \textit{average} impact of different hijacking attacks~\cite{ballani2007study} or hijacker ASes~\cite{qiu2010towerdefense,Lad2-Understanding-resiliency-hijacks-DSN-2007}, whereas our goal is to estimate (through measurements) the \textit{actual impact of an ongoing hijack}. Ballani et al.~\cite{ballani2007study} consider interception hijacks
, and study when they are expected to have significant impact, by providing coarse estimates for groups of ASes (e.g., Tier-1) that could act as hijackers. Similarly, the potential impact of a hijacker AS (or, conversely, the resilience of a victim AS to a hijack) is studied in~\cite{Lad2-Understanding-resiliency-hijacks-DSN-2007}, where the topological characteristics (e.g., node degree) of ASes are used to classify potential hijackers based on the impact they can cause. TowerDefense~\cite{qiu2010towerdefense} aims to find a set of monitors that maximizes the probability to detect a hijack (i.e., at least one monitor is infected); a problem that is complementary to impact estimation, whose aim is to find a \textit{representative set of monitors} (i.e., a set where the fraction of infected monitors is close to the overall hijack impact).

Finally, the framework of~\cite{sermpezis2019pomacs} for predicting the catchment of an anycast deployment, could be used for hijack impact estimation (a hijack can be seen as a setup where the pair \{V,H\} anycasts the same prefix). However, the routing information that is required may not be always accurately known in practice, which would lead to higher errors than a measurement-based NIE (we verified this in our experiments).

\section{Conclusion}\label{sec:conclusion}

The problem of hijack impact estimation has not been given attention in literature, despite its usefulness for network operations and economy, e.g., to know how an ongoing hijack affects a network, or to select and evaluate the efficiency/cost of different mitigation measures. In this paper we made the first steps towards understanding the fundamental (limits, trade-offs, etc.) and practical aspects (use of public infrastructure, measurement failures, etc.) of the impact estimation problem. We also designed accurate estimation techniques that are easy to implement and incorporate in existing defense systems. 



We believe that this work can motivate further research on the topic; we identify and discuss two interesting directions:

A network may exchange traffic with only a subset of ASes in the Internet (and/or different volumes of traffic per AS). In this case, a more fine-tuned estimation of the impact (on the exchanged traffic) can further help network operators. Sophisticated estimators, such as weighted versions of the NIE, Ping-IE, or LRE, can be designed. A preliminary view in this direction is given in Fig.~\ref{fig:RMSE-traffic-subset}, where we apply NIE to estimate the impact only on subsets of ASes: the RMSE of NIE with random set of monitors remains almost constant (we observed similar behavior for the NIE with RC and RA monitors as well). While a formal and detailed study is needed to draw firm conclusions, this is an indication that the main findings of this paper can hold for more generic cases as well. 

The LRE findings can motivate research on the selection and combination of measurements from public infrastructure, with broader applications on the Internet monitoring. To this end, Fig.~\ref{fig:correlations-LRE-weights} provides some initial statistics on the correlation between the LRE weights $w_{i}$ (i.e., the importance of monitor $i$) and the topological characteristics of monitors $i$. Admittedly correlations are weak, however, they reveal some underlying trends and give rise to some interesting questions: (i) Observations from RC and RA monitors/ASes with larger customer cones seem to play a more important role; does this indicate that we should deploy more monitors on such networks? (ii) Top tier networks (e.g., Tier-1) contribute more on the LRE, but only in the RC case; should we devise different strategies for selecting measurements from RC and RA? (iii) Finally, the number of neighbors a monitor has, seems to be a less important feature; would this mean that monitors at IXPs (where networks establish a lot of peerings) are equally important with monitors at stub networks? Further research and (open) data could help answering such questions.

\begin{figure}
\centering
\subfigure[Traffic subset]{\includegraphics[width=0.49\linewidth]{./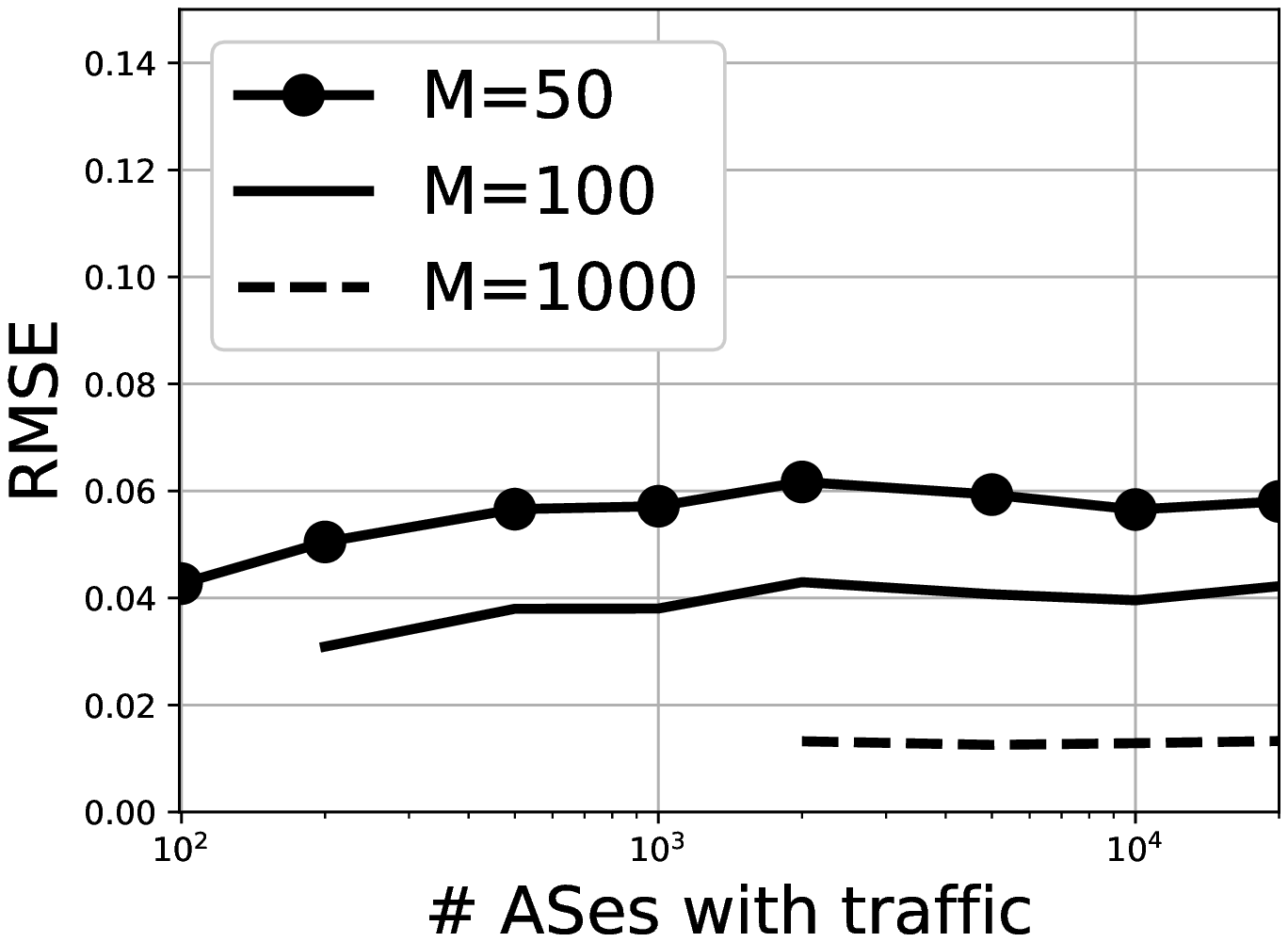}\label{fig:RMSE-traffic-subset}}
\subfigure[LRE weights]{\includegraphics[width=0.49\linewidth]{./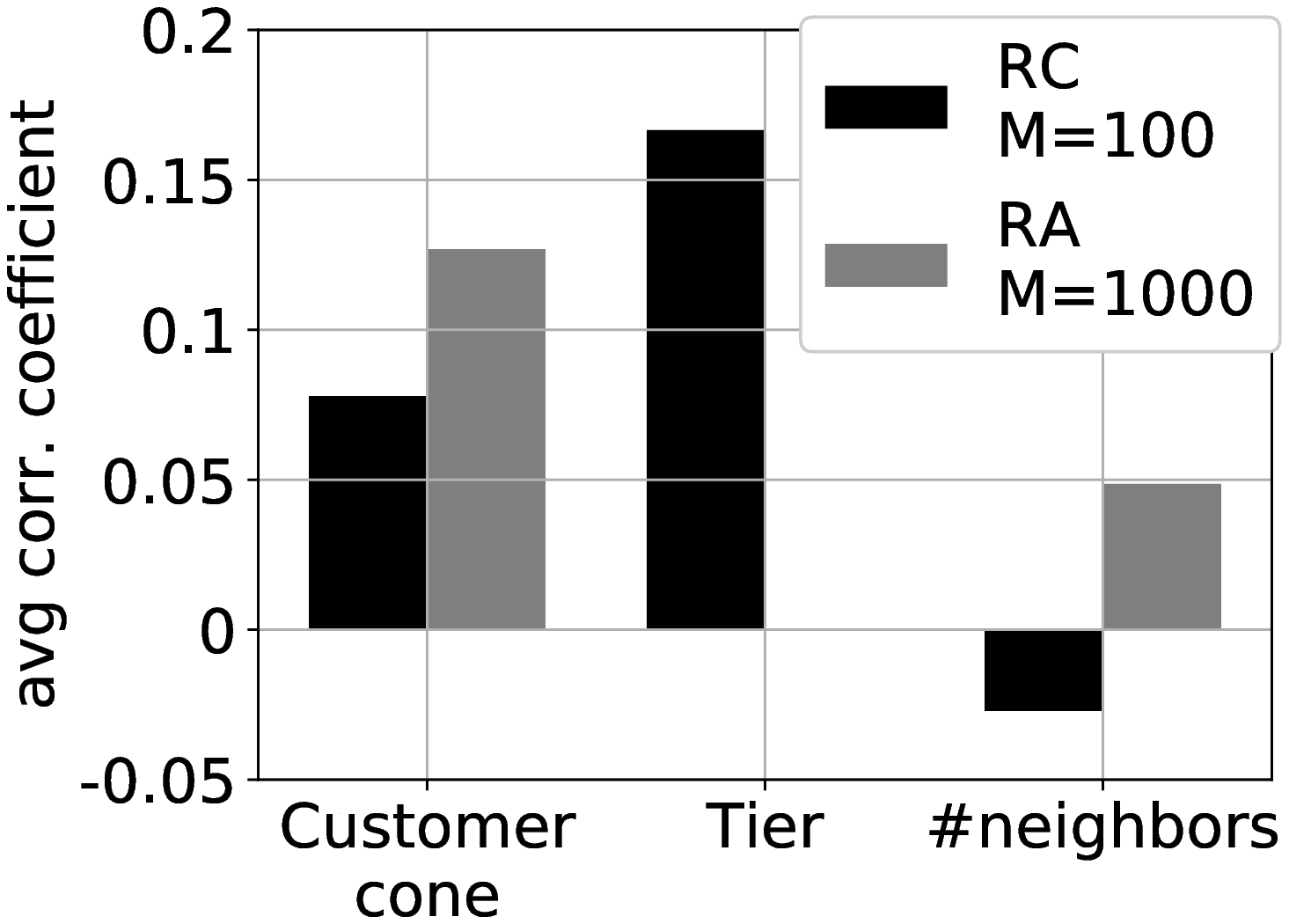}\label{fig:correlations-LRE-weights}}
\vspace{\myfigureskip}
\caption{(a) RMSE (y-axis) of NIE with random set of $M$ monitors measuring the hijack impact on different number of ASes with traffic to the victim AS (x-axis).  (b) Correlation between LRE weights $w_{i}$ and AS-topology characteristics of monitors $i$; average values over 30 simulation scenarios.}
\end{figure}


{ \balance
{
\bibliographystyle{IEEEtran}
\bibliography{references,artemis}

\begin{thebibliography}{10}
\providecommand{\url}[1]{#1}
\csname url@samestyle\endcsname
\providecommand{\newblock}{\relax}
\providecommand{\bibinfo}[2]{#2}
\providecommand{\BIBentrySTDinterwordspacing}{\spaceskip=0pt\relax}
\providecommand{\BIBentryALTinterwordstretchfactor}{4}
\providecommand{\BIBentryALTinterwordspacing}{\spaceskip=\fontdimen2\font plus
\BIBentryALTinterwordstretchfactor\fontdimen3\font minus
  \fontdimen4\font\relax}
\providecommand{\BIBforeignlanguage}[2]{{%
\expandafter\ifx\csname l@#1\endcsname\relax
\typeout{** WARNING: IEEEtran.bst: No hyphenation pattern has been}%
\typeout{** loaded for the language `#1'. Using the pattern for}%
\typeout{** the default language instead.}%
\else
\language=\csname l@#1\endcsname
\fi
#2}}
\providecommand{\BIBdecl}{\relax}
\BIBdecl

\bibitem{google-japan-leak}
BGPMon, ``{BGP leak causing Internet outages in Japan and beyond},''
  \url{bgpmon.net/bgp-leak-causing-internet-outages-in-japan-and-beyond/}.

\bibitem{nanog-squatting}
{NANOG~mailing~list~archives}, ``{Another day, another illicit SQUAT},''
  \url{seclists.org/nanog/2016/Oct/578}, Oct. 2016.

\bibitem{hijack-BitCoins}
\url{www.wired.com/2014/08/isp-bitcoin-theft/}.

\bibitem{hijack-ChinaTelecom}
{\url{www.bgpmon.net/chinese-isp-hijacked-10-of-the-internet/}}.

\bibitem{sermpezis2018survey}
P.~Sermpezis, V.~Kotronis, A.~Dainotti, and X.~Dimitropoulos, ``A survey among
  network operators on bgp prefix hijacking,'' \emph{ACM SIGCOMM Computer
  Communication Review}, vol.~48, no.~1, pp. 64--69, 2018.

\bibitem{rpki-rfc}
M.~Lepinski, R.~Barnes, and S.~Kent, ``An infrastructure to support secure
  internet routing,'' RFC6480, 2012.

\bibitem{bgpsec-specification-2015}
M.~Lepinski, ``{BGPSEC protocol specification},'' RFC8205, 2015.

\bibitem{Jumpstarting-BGP-sigcomm-2016}
A.~Cohen, Y.~Gilad, A.~Herzberg, and M.~Schapira, ``Jumpstarting bgp security
  with path-end validation,'' in \emph{Proc. ACM SIGCOMM}, 2016.

\bibitem{sermpezis2018artemis}
P.~Sermpezis, V.~Kotronis, P.~Gigis, X.~Dimitropoulos, D.~Cicalese, A.~King,
  and A.~Dainotti, ``Artemis: Neutralizing bgp hijacking within a minute,''
  \emph{IEEE/ACM Trans. on Networking}, vol.~26, no.~6, 2018.

\bibitem{Zhang-Ispy-CCR-2008}
Z.~Zhang, Y.~Zhang, Y.~C. Hu, Z.~M. Mao, and R.~Bush, ``i{SPY}: detecting ip
  prefix hijacking on my own,'' \emph{ACM SIGCOMM CCR}, 2008.

\bibitem{heap-jsac2016}
J.~{S}chlamp, R.~{H}olz, Q.~{J}acquemart, G.~{C}arle, and E.~{B}iersack,
  ``{HEAP}: {R}eliable {A}ssessment of {BGP} {H}ijacking {A}ttacks,''
  \emph{{IEEE} {JSAC}}, vol.~34, no.~06, pp. 1849--1861, 2016.

\bibitem{Shi-Argus-IMC-2012}
X.~Shi, Y.~Xiang, Z.~Wang, X.~Yin, and J.~Wu, ``Detecting prefix hijackings in
  the {I}nternet with {A}rgus,'' in \emph{Proc. ACM IMC}, 2012.

\bibitem{qiu2010towerdefense}
T.~Qiu, L.~Ji, D.~Pei, J.~Wang, and J.~Xu, ``Towerdefense: Deployment
  strategies for battling against ip prefix hijacking,'' in \emph{IEEE ICNP},
  2010.

\bibitem{nawrocki2019down}
M.~Nawrocki, J.~Blendin, C.~Dietzel, T.~C. Schmidt, and M.~W{\"a}hlisch, ``Down
  the black hole: Dismantling operational practices of bgp blackholing at
  ixps,'' in \emph{Proc. ACM IMC}, 2019, pp. 435--448.

\bibitem{sermpezis2021estimating-code}
``Supplementary material: Estimating the impact of bgp prefix hijacking,''
  \url{https://github.com/sermpezis/bgp-estimation}, 2021.

\bibitem{miller2019taxonomy}
L.~Miller and C.~Pelsser, ``A taxonomy of attacks using bgp blackholing,'' in
  \emph{European Symposium on Research in Computer Security}, 2019.

\bibitem{rpki-monitor-2019}
NIST, ``{RPKI Monitor},'' \url{rpki-monitor.antd.nist.gov/}, 2019.

\bibitem{chung2019rpki}
T.~Chung \emph{et~al.}, ``Rpki is coming of age: A longitudinal study of rpki
  deployment and invalid route origins,'' in \emph{Proc. ACM IMC}, 2019.

\bibitem{reuter2018towards}
A.~Reuter \emph{et~al.}, ``Towards a rigorous methodology for measuring
  adoption of rpki route validation and filtering,'' \emph{ACM SIGCOMM CCR},
  vol.~48, no.~1, 2018.

\bibitem{ciscobestpath}
Cisco, ``{BGP Best Path Selection Algorithm},''
  \url{https://www.cisco.com/c/en/us/support/docs/ip/border-gateway-protocol-bgp/13753-25.html}.

\bibitem{riperis}
{RIPE NCC}, ``{Routing Information Service (RIS)},''
  \url{https://www.ripe.net/analyse/internet-measurements/routing-information-service-ris}.

\bibitem{routeviews}
{University of Oregon}, ``{Route Views Project},'' \url{www.routeviews.org}.

\bibitem{ripeatlas}
{RIPE NCC}, ``{RIPE Atlas},'' \url{https://atlas.ripe.net/}, {2018}.

\bibitem{ripe-ris-real-time}
``{RIPE RIS} - {S}treaming {S}ervice,''
  {\url{labs.ripe.net/Members/colin_petrie/updates-to-the-ripe-ncc-routing-information-service}}.

\bibitem{caidabgpstream}
C.~Orsini, A.~King, D.~Giordano, V.~Giotsas, and A.~Dainotti, ``Bgpstream: a
  software framework for live and historical bgp data analysis,'' in
  \emph{Proc. ACM IMC}, 2016, \url{https://bgpstream.caida.org/}.

\bibitem{accurate-traceroute-sigcomm}
Z.~M. Mao, J.~Rexford, J.~Wang, and R.~H. Katz, ``Towards an accurate as-level
  traceroute tool,'' in \emph{Proc. ACM SIGCOMM}, 2003, p. 365–378.

\bibitem{bdrmapit}
A.~Marder, M.~Luckie, A.~Dhamdhere, B.~Huffaker, k.~claffy, and J.~M. Smith,
  ``Pushing the boundaries with bdrmapit: Mapping router ownership at internet
  scale,'' in \emph{Proc. ACM IMC}, 2018, p. 56–69.

\bibitem{de2017broad}
W.~B. De~Vries, R.~de~O~Schmidt, W.~Hardaker \emph{et~al.}, ``Broad and
  load-aware anycast mapping with verfploeter,'' in \emph{Proc. ACM IMC}, 2017.

\bibitem{sermpezis2019pomacs}
P.~Sermpezis and V.~Kotronis, ``Inferring catchment in internet routing,''
  \emph{Proc. ACM Meas. Anal. Comput. Syst. (Sigmetrics)}, vol.~3, no.~2, 2019.

\bibitem{AS-relationships-dataset}
CAIDA, ``{AS} relationships,'' \url{data.caida.org/datasets/as-relationships/}.

\bibitem{muhlbauer2006building}
W.~M{\"u}hlbauer, A.~Feldmann, O.~Maennel, M.~Roughan, and S.~Uhlig, ``Building
  an as-topology model that captures route diversity,'' \emph{ACM SIGCOMM CCR},
  vol.~36, no.~4, pp. 195--206, 2006.

\bibitem{stable-internet-routing-TON-2001}
L.~Gao and J.~Rexford, ``Stable internet routing without global coordination,''
  \emph{IEEE/ACM TON}, vol.~9, no.~6, pp. 681--692, 2001.

\bibitem{anwar2015investigating}
R.~Anwar, H.~Niaz, D.~Choffnes \emph{et~al.}, ``Investigating interdomain
  routing policies in the wild,'' in \emph{Proc. ACM IMC}, 2015.

\bibitem{schlinker2019peering}
B.~Schlinker, T.~Arnold, {\'I}.~Cunha, and E.~Katz-Bassett, ``Peering:
  Virtualizing bgp at the edge for research,'' in \emph{Proc. ACM CoNEXT},
  2019, \url{https://peering.usc.edu/}.

\bibitem{arakadakis2018conext-poster}
K.~Arakadakis and \textit{et al.}, ``{Analysis of BGP prefix hijacking events:
  a commercial service's view},'' ACM CoNEXT (poster), 2018.

\bibitem{testart2019profiling}
C.~Testart, P.~Richter, A.~King, A.~Dainotti, and D.~Clark, ``Profiling bgp
  serial hijackers: Capturing persistent misbehavior in the global routing
  table,'' in \emph{Proc. ACM IMC}, 2019.

\bibitem{pingable-ip-2013}
D.~Soler,
  ``\url{https://www.securityartwork.es/2013/02/07/the-result-of-pinging-all-the-internet-ip-addresses/},''
  2013.

\bibitem{heidemann2008census}
J.~Heidemann and et~al., ``Census and survey of the visible internet,'' in
  \emph{Proc. ACM IMC}, 2008, pp. 169--182.

\bibitem{antlabhitlists}
{ANT Lab}, ``Ip address space hitlists,''
  \url{https://ant.isi.edu/datasets/ip_hitlists/format.html}.

\bibitem{ripestat}
{RIPE NCC}, ``{RIPEstat},'' \url{stat.ripe.net/}.

\bibitem{caidapfx2as}
CAIDA, ``{Routeviews Prefix-to-AS mappings (pfx2as) for IPv4 and IPv6},''
  \url{http://data.caida.org/datasets/routing/routeviews-prefix2as/}, {2019}.

\bibitem{bgpstream-website}
{CAIDA}, ``{BGPS}tream,'' \url{bgpstream.caida.org/}.

\bibitem{bgpstream-paper}
C.~Orsini, A.~King, D.~Giordano, V.~Giotsas, and A.~Dainotti, ``Bgpstream: A
  software framework for live and historical bgp data analysis,'' in
  \emph{Proc. of ACM IMC}, 2016, pp. 429--444.

\bibitem{davidson2000econometric}
J.~Davidson, \emph{Econometric theory}.\hskip 1em plus 0.5em minus 0.4em\relax
  Wiley-Blackwell, 2000.

\bibitem{quan2013trinocular}
L.~Quan, J.~Heidemann, and Y.~Pradkin, ``Trinocular: Understanding internet
  reliability through adaptive probing,'' \emph{ACM SIGCOMM Computer
  Communication Review}, vol.~43, no.~4, pp. 255--266, 2013.

\bibitem{padmanabhan2019find}
R.~Padmanabhan, A.~Schulman, A.~Dainotti, D.~Levin, and N.~Spring, ``How to
  find correlated internet failures,'' in \emph{Proc. PAM}, 2019.

\bibitem{ballani2007study}
H.~Ballani, P.~Francis, and X.~Zhang, ``A study of prefix hijacking and
  interception in the internet,'' \emph{ACM SIGCOMM CCR}, vol.~37, no.~4, 2007.

\bibitem{Lad2-Understanding-resiliency-hijacks-DSN-2007}
M.~Lad, R.~Oliveira, B.~Zhang, and L.~Zhang, ``Understanding resiliency of
  internet topology against prefix hijack attacks,'' in \emph{IEEE DSN}, 2007.

\end{thebibliography}
}
}

\newpage
\appendices
\section{Proof of Theorem~\ref{thm:NIE-bias-rmse}}\label{appendix:proof-NIE-bias-rmse}

The hijack impact $I$ is the fraction of ASes that are infected. 
When choosing randomly the ASes/monitors to be measured (and the number of measurements is much less than the total number of ASes), we can reasonably model each measurement as an independent Bernoulli trial with probability $I$, i.e., the probability that a selected monitor is infected is $I$. 
%
The NIE involves the sum of $M$ independent Bernoulli trials
, and thus the sum follows a binomial distribution
, i.e., $\sum_{i\in\mathcal{M}}m_{i}\sim Binomial(M,I)$,
for which it holds that
\begin{align}
E\left[\textstyle\sum_{i\in\mathcal{M}}m_{i}\right] &= M\cdot I\label{eq:expectation-binomial}\\
Var\left[\textstyle\sum_{i\in\mathcal{M}}m_{i}\right] &= M\cdot I\cdot (1-I)\label{eq:variance-binomial}
\end{align}

\myitem{Bias.} The bias of an estimator is defined as
\begin{equation}\label{eq:bias-generic}
Bias = E[\hat{I}-I] = E[\hat{I}]-I
\end{equation}
For the NIE it holds that 
\begin{align}
\textstyle E[\hat{I}]  
= E\left[\frac{1}{M}\cdot \sum_{i\in\mathcal{M}}m_{i}\right]
= \frac{1}{M}\cdot E\left[ \textstyle\sum_{i\in\mathcal{M}}m_{i}\right]
=I\label{eq:expectation-nie}
\end{align}
where in the last equation we used the expression of \eq{eq:expectation-binomial}. Substituting \eq{eq:expectation-nie} in \eq{eq:bias-generic} gives $Bias=0$.

\myitem{RMSE.}
The RMSE of NIE, given the actual impact $I$, is:
\begin{align}
\textstyle
RMSE(I)   
&\textstyle = \sqrt{E[(\hat{I} - I)^2]} 
= \sqrt{E[(\hat{I} - E[\hat{I}])^2]}\nonumber\\
&\textstyle = \sqrt{Var[\hat{I}]}
= \sqrt{Var[\frac{1}{M}\textstyle\sum_{i\in\mathcal{M}}m_{i}]}\nonumber\\
&\textstyle = \sqrt{\frac{1}{M^{2}}\cdot Var[\textstyle\sum_{i\in\mathcal{M}}m_{i}]}
= \frac{\sqrt{I\cdot (1-I)}}{\sqrt{M}}\label{eq:rmse-i}
\end{align}
where we use the definition of the variance ($Var(x)=E[(x-E[x])^{2}]$), and in the last equality the expression from \eq{eq:variance-binomial}.

Then, the RMSE of NIE follows by taking the expectation of $RMSE(I)$ in \eq{eq:rmse-i} over the impact distribution $f(I)$:
\begin{align}
\textstyle
RMSE    
= \int_{0}^{1} \frac{\sqrt{I\cdot (1-I)}}{\sqrt{M}} \cdot f(I)\cdot dI
= \frac{1}{\sqrt{M}} \cdot c_{I}
\end{align}



\section{Proof of Theorem~\ref{thm:NIE-bias-rmse-p}}\label{appendix:proof-NIE-bias-rmse-p}

If $m_{i}=1$, then $\hat{m}_{i}$ is $1$ as well. However, if $m_{i}=0$, then $\hat{m}_{i}$ is $1$ with probability $p$ (Definition~\ref{def:ping-failure}) and 
$0$ with probability $1-p$. Taking into account the fact that the indicator $m_{i}$ follows a Bernoulli trial with probability $I$ (see Appendix~\ref{appendix:proof-NIE-bias-rmse}), 
it follows that $\hat{m}_{i}$ follows also a Bernoulli trial with probability
\begin{align}\label{eq:prob-hat-m}
P\{\hat{m}_{i}\text{=1}\} &= P\{m_{i}\text{=1}\} + p \cdot P\{m_{i}\text{=0}\}
= I+(1-I) p
\end{align}
and thus it holds 
\begin{align}
E[\hat{m}_{i}] = P\{\hat{m}_{i}=1\} = I + (1-I)\cdot p \label{eq:expectation-hat-m}
\end{align}

In the case of measurement failures, the expression of NIE is calculated from the indicators $\hat{m}_{i}$. Thus, the expectation is
\begin{align}
E[\hat{I}] 
&= \textstyle E[\frac{1}{M}\sum_{i\in\mathcal{M}} \hat{m}_{i}] 
= \frac{1}{M}\cdot E[\sum_{i\in\mathcal{M}} \hat{m}_{i}] \nonumber\\
&\textstyle= \frac{1}{M}\cdot M\cdot \left(I+(1-I)\cdot p\right) = I+(1-I)\cdot p \label{eq:expectation-nie-failures}
\end{align}
where in the third equality we used the expression of \eq{eq:expectation-hat-m}.

\myitem{Bias.} The bias follows by substituting \eq{eq:expectation-nie-failures} in \eq{eq:bias-generic}:
\begin{align}
Bias_{NIE} = E[\hat{I}-I] = (1-I)\cdot p
\end{align}

\myitem{RMSE.} The RMSE of NIE, given the actual impact $I$, is:
\begin{align*}
RMSE(I)
&= \sqrt{E[(\hat{I} - I)^2]}= \sqrt{E[\hat{I}^2] + I^{2} - 2\cdot I\cdot E[\hat{I}]} 
\end{align*}
and using the property $Var(x) = E[x^2]-(E[x])^{2}$, gives:
\begin{align}
RMSE(I) &= \sqrt{Var[\hat{I}]+(E[\hat{I}])^{2} + I^{2} - 2\cdot I\cdot E[\hat{I}]} \nonumber\\
&\textstyle= \sqrt{Var[\hat{I}]+(E[\hat{I}]-I)^{2}}\nonumber\\
&\textstyle= \sqrt{Var[\frac{1}{M}\cdot\sum_{i\in\mathcal{M}} \hat{m}_{i}]+(I+(1-I)\cdot p-I)^{2}} \nonumber\\
&\textstyle= \sqrt{\frac{1}{M^{2}}\cdot Var[\sum_{i\in\mathcal{M}} \hat{m}_{i}]+((1-I)\cdot p)^{2}} \label{eq:rmse-I-failure-intermediate}
\end{align}
The quantity $\sum_{i\in\mathcal{M}} \hat{m}_{i}$ is the sum of $M$ independent Bernoulli trials with probability given by \eq{eq:prob-hat-m}. Therefore
\begin{align}
Var&\textstyle[\sum_{i\in\mathcal{M}} \hat{m}_{i}] 
= M\cdot P\{\hat{m}_{i}=1\}\cdot (1-P\{\hat{m}_{i}=1\})   \nonumber\\
&= M\cdot (I+(1-I)\cdot p)\cdot (1-I+(1-I)\cdot p) \label{eq:variance-sum-hat-m}
\end{align}
Substituting \eq{eq:variance-sum-hat-m} in \eq{eq:rmse-I-failure-intermediate}, 
and taking the expectation over the distribution $f(I)$, gives the expression of the theorem.



\section{The effect of the Public Monitors' Location on the NIE accuracy}\label{appendix:monitors-location}

\extratext{The simulation results in Fig.~\ref{fig:rmse-per-continent} validate that the lower accuracy of public monitors is due to their non uniform locations. We grouped hijacks based on the locations of the \{V,H\} ASes (i.e., continents where their headquarters are located\footnote{Retrieved from CAIDA's AS-Rank dataset \url{as-rank.caida.org}
}). We observe that when \{V,H\} reside in different continents (left subplot), the difference in the visibility from public monitors leads to higher RMSE (while the difference in random monitors is much smaller). The right subplot presents some indicative cases, where we see that when V or H are located in Asia (where public monitors are scarce) the RMSE is significantly higher than in the case where \{V,H\} are in N./S. America; on the contrary, random monitors yield similar accuracy in all cases.}

\begin{figure}[h]
\centering
\includegraphics[width=0.49\linewidth]{./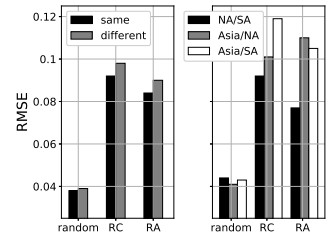}\label{fig:rmse-per-continent}
\caption{RMSE of NIE (y-axis) using $M=100$ monitors from different sets (x-axis); comparing results where \{V,H\} are located in the same/different continents (left subplot) and in N.America/S.America/Asia (right subplot).}
\label{fig:rmse-per-continent}
\end{figure}









\end{document}